\theoremstyle{plain}
\newtheorem{theorem}{Theorem}
\newtheorem{remark}{Remark}
\newtheorem{lemma}[theorem]{Lemma}
\theoremstyle{definition}
\newtheorem{definition}{Definition}[section]
\newcommand{\norm}[1]{{\left\vert\kern-0.25ex\left\vert\kern-0.25ex\left\vert #1
    \right\vert\kern-0.25ex\right\vert\kern-0.25ex\right\vert}}
\newcommand{\cf}{{\it cf. }}
\newcommand{\mesobomo}{{MESOB-OMO}}
\newcommand{\mesob}{{MESOB}}
\title{MESOB: Balancing Equilibria \& Social Optimality}
\author{
Xin Guo\thanks{Amazon, emails: \texttt{\{xnguo,llh,sareh,rabihsal\}@amazon.com}}
\thanks{Also affiliated with University of California Berkeley, IEOR Dept.}
\And
Lihong Li\footnotemark[1]
\And
Sareh Nabi\footnotemark[1]\,\,\footnotemark[4]
\And
Rabih Salhab\footnotemark[1]
\And
Junzi Zhang\thanks{Citadel Securities, work done while at Amazon, email: \texttt{saslas.c.royale@gmail.com}}\,\,\thanks{These authors contributed equally to this work and are co-first authors.}
}
\begin{document}

\maketitle

\begin{abstract}
Motivated by bid recommendation in online ad auctions, this paper considers a general class of multi-level and multi-agent games, with two major characteristics: one is a large number of anonymous agents, and the other is the intricate interplay between competition and cooperation. To model such complex systems, we propose a novel and tractable bi-objective optimization formulation with mean-field approximation, called \mesob\ (\textbf{M}ean-field \textbf{E}quilibria \& \textbf{S}ocial \textbf{O}ptimality \textbf{B}alancing), as well as an associated occupation measure optimization (OMO) method called \mesobomo\ to solve it. \mesobomo\ enables obtaining approximately Pareto-efficient solutions in terms of the dual objectives of competition and cooperation in MESOB, and in particular allows for Nash equilibrium selection and social equalization in an asymptotic manner. We apply \mesobomo\ to bid recommendation in a simulated pay-per-click ad auction. Experiments demonstrate its efficacy in balancing the interests of different parties and in handling the competitive nature of bidders, as well as its advantages over baselines that only consider either the competitive or the cooperative aspects.
\end{abstract}

\section{Introduction}
\label{intro}

The interplay between  competition and cooperation can be observed in a wide range of human interactions across various fields, such as economics, healthcare, education, sports, and politics, among others. Online advertising, which promotes product or service discovery through various online channels such as search engines, websites, and social media, is a prime example of the interplay between competition and cooperation. The ad market consists of multiple parties (advertisers, shoppers, and publisher) with their own interests. Advertisers compete in auctions for opportunities to show their ads to users visiting a website or using app. The highest bidder wins the ad placement to reach their desired audience, and is charged a price determined by a pricing mechanism. The publisher enriches its content portfolio with ads, and the revenue (paid by advertisers) ensures further content optimization that benefit users in the long term.

A common service offered to advertisers by publishers such as Amazon or Google is bid recommendations. The recommendation is based on various factors such as historical performance. These recommendations help advertisers make informed decisions and increase their chances of winning auctions, and ultimately improve user experience and publisher revenue. One key question for the ad recommendation service is how to make bid suggestions which are in the best interests of all parties involved, thus foster long-term success and satisfaction.

Bid recommendation can be viewed as a multi-level, multi-agent game, with two major characteristics: one is a large number of anonymous agents (a.k.a., advertisers) and the other the need to balance interests between the publisher and other parties involved including shoppers and advertisers. Recent development in mean-field theory inspires mean-field approaches to modeling the large number of agents, in order to avoid analyzing the otherwise computationally intractable multi-agent game. However, focusing the analysis mainly on the competition among advertisers is insufficient for balancing interests among multiple parties as illustrated in \S\ref{sec:bidding} and Appendix \ref{sec:empirical:baseline}. One approach is for the bid recommendation service to take the role of a \emph{social planner} with a dual goal of optimizing the social benefits of the shoppers, advertisers and the ad publisher, while maximizing the advertisers' individual interests given their inherent competitions.

\paragraph{Contributions.} Our contributions are three-fold. First, motivated by bid recommendation, we introduce a novel and tractable bi-objective optimization formulation (\mesob) (\textbf{M}ean-field \textbf{E}quilibria \& \textbf{S}ocial \textbf{O}ptimality  \textbf{B}alancing) for large-population systems where both competitive and cooperative interactions coexist. \mesob\ is derived from three key components: mean-field approximation for games with a large number of agents, the notion of exploitability for the analysis of Nash equilibria, and the notion of $(\epsilon_1, \epsilon_2)$-Pareto efficiency for the interplay between competition and cooperation.  
Second, we adopt the idea of occupation measure optimization (OMO) of \cite{guo2022mf} to translate the set of Nash equilibrium constraints into an equivalent and {\it finite} set of simpler constraints, and obtain  subsequently a constrained optimization problem called MESOB-OMO. This new formulation enables obtaining approximately Pareto-efficient solutions (Theorem~\ref{mesobomo-vs-pareto}), and in particular allows for  Nash equilibrium selection and social equalization in an asymptotic manner (Theorem~\ref{mesob-omo-asymptotic}).  
Finally, we apply MESOB-OMO to bid recommendation in simulated online pay-per-click ad auctions. Experiments show MESOB-OMO can balance the interests of different parties, while considering the competitive nature of bidders. Our results highlight the benefit of MESOB-OMO over existing approaches that solely focus on either the competitive or the cooperative aspects of the problem.

\paragraph{Related work.} 
The idea of mean-field approximation for games with a large number of agents has been widely adopted to approximate the otherwise generally intractable $N$-player games \citep{yang2020overview} since the pioneering work of \cite{huang2006large} and \cite{Lasry07_MFG}. Since then, the mean-field theory has evolved significantly. However, existing works focus on either the competitive (a.k.a. mean-field) games \citep{carmona2018probabilistic}
or the cooperative setting (a.k.a., mean-field control) 
\citep{pham2016discrete,gu2021mean}. 

The main research efforts on multi-level games are the principal-agent game (a.k.a., contract theory and mechanism design) \citep{sannikov2008continuous,elie2019tale} and Stackelberg game \citep{conitzer2006computing,fiez2019convergence,guo2022optimization}. Again, the primary optimization focus is the principal/leader thus not  bi-objective. 

The main concepts adopted to analyze the interplay between competition and cooperation for both $N$-player and mean-field games are the Price of Anarchy and Price of Stability \citep{roughgarden2007introduction,xu2019stochastic,bayraktar2021terminal,delarue2020selection,conitzer2022multiplicative}. However, these notions characterize mainly the gap between the social optimality and Nash equilibria, and not sufficient to analyze the balance between the social welfare and equilibria. 
 
Last but not least, there is a substantial literature studying bidding games in ad auctions. Again they mainly focus on bidder competitions, including  pacing equilibrium \citep{conitzer2022multiplicative} and system (Nash) equilibrium \citep{balseiro2017budget,IyeJohSun2014Mean,guo2019learning}.

\section{Problem formulation and solutions}
\label{sec:model}

In this section, we formulate and analyze a general mathematical framework of multiple-level\footnote{For a more precise definition, refer to Equation \eqref{bilevel} in \S\ref{sec:formulation}.} and multi-agent game. The framework is characterized by a large number of anonymous agents, and the need to balance interests between the social planner and other parties involved. In our motivating example of bid recommendation in ad auctions, the bid recommendation service plays the role of the social planner, and the advertisers are the anonymous agents.

\subsection{Problem formulation}
\label{sec:formulation}

\paragraph{Agent dynamics with mean-field interaction.} Given the large number of anonymous agents, we assume weak interaction among agents to simplify their interactions. Here, weak interaction means that agents interact solely through the population mean-field, which is the empirical distribution of their states and actions, as found in the classical mean field game literature~\citep{Lasry07_MFG,lasry2006jeux, lasry2006jeux2,Huang06_particles,huang2006large,saldi2018markov,gu2021mean}. This assumption streamlines the analysis by focusing on aggregated behavior rather than individual interactions.

With this mean-field approximation, one can now focus on any ``representative'' agent, and model her dynamics by a tuple $(\mathcal{S},\mathcal{A},\{P_t\}_{t=0}^T,\mu_0)$, where $\mathcal{S}$ and $\mathcal{A}$ are finite state and action spaces, respectively with $S:=|\mathcal{S}|<\infty$ and $A:=|\mathcal{A}|<\infty$, $P_t$ is the state-transition probability at time $t$, $\mu_0$ is the initial population state distribution, and $T\in[0,\infty)$ is a fixed time horizon. Furthermore, denote by $\Delta(\mathcal{X})$ the set of probability distributions on a set $\mathcal{X}$.

Given the state $s_t\in\mathcal{S}$ of the representative agent and her action $a_t \in \mathcal{A}$ at time $t$, and the population state-action joint distribution $L_t\in\Delta(\mathcal{S}\times\mathcal{A})$, her state at time $t+1$ follows the distribution $s_{t+1}\sim P_t(\cdot|s_t,a_t,L_t)$. By symmetry of the agents, we consider agents taking actions following the same time-dependent Markov policy, $\pi \in \mathcal{M}:=\{\{\pi_{t}\}_{t=0}^T|\pi_t:\mathcal{S}\to \Delta(\mathcal{A})\}$, where $\pi_t(a|s)$ is the probability of taking action $a$ in state $s$ at time $t$. Accordingly, the population state-action joint distribution $L\in\mathcal{L}:=\{\{L_t\}_{t=0}^T|L_t\in\Delta(\mathcal{S}\times\mathcal{A})\}$ initializes from $L_0(s,a)=\mu_0(s)\pi_0(a|s)$, and is recursively updated by $L_{t+1}(s',a')=\mu_t(s')\pi_t(a'|s')$, with $\mu_t(s')=\sum_{s\in\mathcal{S},a\in\mathcal{A}}L_t(s,a)P_t(s'|s,a,L_t)$. 
For notation simplicity, define $\Gamma:\mathcal{M}\rightarrow\mathcal{L}$ for the above recursive mapping such that  $\Gamma(\pi)_0(s,a):=\mu_0(s)\pi_0(a|s)$, and for $t=0,\dots,T-1$,
\begin{equation}\label{eq:gamma}
\Gamma(\pi)_{t+1}(s,a):=\pi_{t+1}(a|s)\sum_{s'\in\mathcal{S},a'\in\mathcal{A}}\Gamma(\pi)_t(s',a')P_t(s|s',a',\Gamma(\pi)_t).
\end{equation}
Unlike the standard mean-field framework, the agents here are {\it anonymous} but not necessarily uniformly {\it homogeneous}. Indeed, such a heterogeneity can be modeled by incorporating agent types into the state $s_t\in\mathcal{S}$ so that $s_t=(c,\tilde{s}_t)$, where $c$ is the (static) type  while $\tilde{s}_t$ is the dynamic state component \citep[Remark 2]{cui2021learning}.
In bid recommendation, for example, $c$ might represent the advertiser's ad quality and product category, while $\tilde{s}_t$ might denote the advertiser's reduced budget following the acquisition of an ad slot and payment to the publisher after a click. Throughout the paper, we assume  $\mu_0>0$ component-wise to allow for this type of heterogeneity. 
 
\paragraph{Agent rewards and Nash equilibrium.} Given the mean-field dynamics, at each time step $t$, the representative agent collects a reward $r_t(s_t,a_t,L_t)$, which depends on her current state $s_t$ and her action $a_t$, as well as the population distribution $L_t$. Her expected cumulative reward from an initial state $s_0=s$, policy $\pi\in\mathcal{M}$, and a population distribution flow $L\in\mathcal{L}$ is defined as 
$
J(s,\pi,L)=\mathbb{E}\left[\sum\nolimits_{t=0}^Tr_t(s_t,a_t,L_t)\right],
$
where $s_0=s$, $s_{t+1}\sim P_t(\cdot|s_t,a_t,L_t)$ for $t=0,\dots,T-1$ and $a_t\sim\pi_t(\cdot|s_t)$ for $t=0,\dots,T$. 

Here, we adopt the Nash equilibrium (NE) to analyze this mean-field game for the agents. A policy $\pi\in\mathcal{M}$ is called an NE for a mean-field game, if and only if there exists $L\in\mathcal{L}$, such that the best response condition and the consistency condition hold. That is,  
\[
J(s,\pi',L)\leq J(s,\pi,L),\quad\forall \pi'\in\mathcal{M},\,s\in\mathcal{S}.
\]
{\it and} $L=\Gamma(\pi)$, where the latter indicates that the population distribution flow is consistent with $\pi$ \citep{huang2006large,guo2019learning}.

\paragraph{Agent contribution to social optimality.} Given the impact of agents' dynamics over multiple stakeholders in the game, the social planner will incorporate agents' individual contributions to various social metrics into the overall objective, from which the social welfare is maximized. More precisely, consider $K>0$ relevant social metrics $V^{(k)}:\mathcal{L}\rightarrow\mathbb{R}$, where 
each social metric $V^{(k)}(L)$ for  $k=1,\dots,K$ at each time step $t=0,\dots,T$ is the aggregation of the agents' individual contributions $r_t^{(k)}(s_t,a_t,L_t)$, such that  
\begin{equation}\label{Vk_expansion}
 V^{(k)}(L):=\sum_{t=0}^T\sum_{s\in\mathcal{S},a\in\mathcal{A}}L_t(s,a)\ r_t^{(k)}(s,a,L_t).   
\end{equation}
The social welfare is then defined as 
\begin{equation}\label{V_formula}
V(L):=F(V^{(1)}(L),\dots,V^{(K)}(L)),
\end{equation}
where $F:\mathbb{R}^K\rightarrow\mathbb{R}$ is a function that connects the social metrics into a general form of utility functions. Since $L=\Gamma(\pi)$, 
this social welfare reduces to maximizing $V(\Gamma(\pi))$ over $\pi\in\mathcal{M}$. 

In bid recommendation, the social metrics may be the Return on Ad Spend (RoAS) for the advertiser, ad revenues for the publisher, and shopper clicks (as a proxy for shopper satisfaction). When $K=1$ and $F$ is the identity mapping, maximizing the overall social welfare is the classic mean-field control. This generalization of MFC enables general-utility use cases, such as RoAS in the bid recommendation application (refer to \S\ref{bid_recommend_example} for more details). 

\paragraph{Bi-level optimization problem for the social planner.} The social planner aims to maximize the welfare function $V(\Gamma(\pi))$ over the set of policies that are the NE of the agents' mean-field game. More precisely, it solves the following bi-level optimization problem: 
\begin{equation}\label{bilevel}
\begin{array}{ll}
\text{maximize}_{\pi\in\mathcal{M}} & V(\Gamma(\pi))\\
\text{subject to} & J(s,\pi',L)\leq J(s,\pi,L), \quad L=\Gamma(\pi), \quad \forall \pi'\in\mathcal{M},s\in\mathcal{S}.
\end{array}
\end{equation}

However, this optimization problem is intractable, given the infinite number of constraints to characterize NE. More importantly, it suffers from the price of stability (PoS) \citep{roughgarden2007introduction}, namely, the intrinsic gap between the best achievable social welfare among the NEs and the overall best social welfare. We will develop an effective solution in the next section.

To address these issues, we will in the next section propose an alternative bi-objective optimization problem, by utilizing the notions of exploitability and $(\epsilon_1, \epsilon_2)$-Pareto efficiency. 
We then borrow the idea of  the occupation measure in \cite{guo2022mf} to translate the set of NE constraints into an equivalent and {\it finite} set of simpler constraints. The subsequent (single-objective) optimization problem is called MESOB-OMO, and we will show that it can be used to bridge the gap between social optimality and NEs with an arbitrarily desired trade-off (\textit{cf}. Theorem \ref{mesobomo-vs-pareto} and Theorem \ref{mesob-omo-asymptotic}).

\subsection{MESOB-OMO}\label{mesob-omo-main}

Our solution is built on the notions of exploitability and $(\epsilon_1, \epsilon_2)$-Pareto efficiency, to be defined shortly. We leverage the idea of the occupation measure in \cite{guo2022mf} to translate the set of NE constraints in \eqref{bilevel} into an equivalent and {\it finite} set of simpler constraints. The subsequent optimization problem, \mesobomo, is able to bridge the gap between social optimality and NEs with an arbitrarily desired trade-off (Theorems \ref{mesobomo-vs-pareto} and \ref{mesob-omo-asymptotic}).

\paragraph{\mesob: bi-objective problem of the social planner.} We first consider an alternative bi-objective optimization problem, by utilizing two notions. The first is the {\it exploitability} of a policy $\pi$:
\begin{equation}
\label{eq:exploitability}
    \text{Expl}(\pi):=\max_{\pi'\in\mathcal{M}}\sum_{s\in\mathcal{S}}\mu_0(s)\left[J(s,\pi',\Gamma(\pi))-J(s,\pi,\Gamma(\pi))\right],
 \end{equation}
which measures the distance of the policy $\pi$ from a Nash equilibrium~\citep{perrin2020fictitious}. By definition, $\text{Expl}(\pi) \geq 0$, and is equal to $0$ if and only if $\pi$ is an NE. Hence, finding NE is equivalent to minimizing exploitability. With this definition, the bi-level optimization \eqref{bilevel} is reformulated into one of maximizing $V(\Gamma(\pi))$ while minimizing $\text{Expl}(\pi)$, over all $\pi\in\mathcal{M}$. We refer to this problem as \textbf{MESOB} (\textbf{M}ean-field \textbf{E}quilibria \& \textbf{S}ocial \textbf{O}ptimality \textbf{B}alancing).

The second notion is $(\epsilon_1,\epsilon_2)$-Pareto efficiency, which characterizes the performance of a policy $\pi\in\mathcal{M}$ for the bi-objective optimization problem. 

\begin{definition}
\label{def:pareto}
A policy $\pi\in\mathcal{M}$ is called $(\epsilon_1,\epsilon_2)$-Pareto efficient for MESOB, if there is no policy $\pi'\in\mathcal{M}$ such that: $V(\Gamma(\pi'))\geq V(\Gamma(\pi))+\epsilon_1$, $\text{Expl}(\pi')\leq \text{Expl}(\pi)-\epsilon_2$, and at least one of the two inequalities is strict.
\end{definition}
Note that $(0,0)$-Pareto efficiency corresponds to the standard Pareto-efficiency. It also generalizes the notion of $\epsilon$-Pareto-efficiency~\citep{liu1996epsilon}.

It is tempting to find directly a policy $\pi$ that minimizes the scalarization of the following bi-objective optimization problem 
\begin{equation}\label{scalarization}
\text{minimize}_{\pi\in\mathcal{M}}\, -\lambda_1 V(\Gamma(\pi)) + \lambda_2 \, \text{Expl}(\pi),
\end{equation}
where suitable constants $\lambda_1,\lambda_2>0$. However, the exploitability term $\text{Expl}(\pi)$ is neither convex nor smooth, even when the rewards and the dynamics are smooth with respect to the population distributions $L_t$, rendering \eqref{scalarization} difficult to solve. Instead, we propose  an occupation-measure-based optimization (OMO) approach to compute the $(\epsilon_1,\epsilon_2)$-Pareto efficient solution for \mesob, which we call \mesobomo, that addresses the smoothness issue.

\paragraph{\mesobomo.} 
Our approach is inspired by the optimization reformulation of mean-field games \citep{guo2022mf}, in which an occupation measure variable $d\in\mathbb{R}^{SA(T+1)}$ is adopted to  approximate the population distribution flow with $d_{t,s,a}\approx L_t(s,a)$. This occupation measure was first adopted for the linear program formulation of MDPs by \cite{manne1960linear}. 

Besides the occupation measure, there are two additional critical components $g^{\text{CS}}(d)$ and $h^{\text{BR}}(y,z,d)$, to be defined later. Intuitively, $g^{\text{CS}}(d)$ encodes the consistency condition of a mean-field NE, and $h^{\text{BR}}(y,z,d)$ encodes the best-response condition of NE. 

With these key components, MESOB can be reformulated as the following optimization problem that optimizes over an occupation measure variable $d\in\mathbb{R}^{SA(T+1)}$, and auxiliary variables $y\in\mathbb{R}^{S(T+1)}$ and $z\in\mathbb{R}^{SA(T+1)}$: 
\begin{equation}\label{mesobomo}
    \begin{array}{ll}
        \text{minimize}_{y,z,d} &
        f^{\texttt{MESOB-OMO}}(y,z,d;\lambda_1,\lambda_2,\rho_1,\rho_2)\\
        \text{subject to} & \{\vec{d}_{t}\}_{t=0}^T\subseteq\Delta(\mathcal{S}\times\mathcal{A}),\quad z\in Z,\\
        &\|y\|_2\leq \frac{S(T+1)(T+2)r_{\max}}{2}.
        \tag{\mesobomo}
    \end{array}
\end{equation}

The detailed step-by-step derivation of {MESOB-OMO} will be given in the appendix.
Here, $f^{\texttt{MESOB-OMO}}(y,z,d;\lambda_1,\lambda_2,\rho_1,\rho_2):=-\lambda_1 V(d)+\lambda_2z^\top d+\rho_1g^{\text{CS}}(d)+\rho_2 h^{\text{BR}}(y,z,d)$.
$\lambda_1,\,\lambda_2>0$ and $\rho_1,\,\rho_2>0$ are two pairs of user-defined hyper-parameters. The former serves as trade-off parameters for the dual objectives, while the latter as penalty parameters to enforce the consistency and best-response conditions, respectively, thus approximating exploitability. In addition, $\vec{d}_t\in\mathbb{R}^{SA}$ is the $t$-th slicing of the tensor in its time dimension (and accordingly, $\vec{d}_t\approx L_t$).

Specifically, $Z:=\left\{z\,|\,z\geq0,\, {\bf 1}^\top z\leq SA(T^2+T+2)r_{\max}\right\}$, and 
$y\in\mathbb{R}^{S(T+1)}$ and $z\in\mathbb{R}^{SA(T+1)}$ are auxilliary variables. Moreover,
{
\[
g^{\text{CS}}(d) := \sum\limits_{s\in\mathcal{S}}\left(\sum\limits_{a\in\mathcal{A}}d_{0,s,a}-\mu_0(s)\right)^2 \!\!
+\sum\limits_{s'\in\mathcal{S}}\sum\limits_{t=0}^{T-1}\left(\sum\limits_{a\in\mathcal{A}}d_{t+1,s',a}-\!\!\!\!\sum\limits_{s\in\mathcal{S},a\in\mathcal{A}} \!\!\!\! d_{t,s,a}P_t(s'|s,a,\vec{d}_{t})\right)^2\!\!,
\]
}
and
{
\[
\begin{split}
\lefteqn{h^{\texttt{BR}}(y,z,d) := \sum\limits_{s\in\mathcal{S},a\in\mathcal{A}}\left(y_{T-1,s}-r_T(s,a,\vec{d}_{T})-z_{T,s,a}\right)^2} \\
&+\sum\limits_{s\in\mathcal{S},a\in\mathcal{A}}\sum\limits_{t=0}^{T-2}\left(y_{t,s}-r_{t+1}(s,a,\vec{d}_{t+1})-\sum_{s'\in\mathcal{S}}P_{t+1}(s'|s,a,\vec{d}_{t+1})y_{t+1,s'}-z_{t+1,s,a}\right)^2\\
&+\sum\limits_{s\in\mathcal{S},a\in\mathcal{A}}\left(y_{T,s}+r_0(s,a,\vec{d}_{0})+\sum\limits_{s'\in\mathcal{S}}P_0(s'|s,a,\vec{d}_{0})y_{0,s'}+z_{0,s,a}\right)^2.
\end{split}
\]
}

Note that the constraints on $y,z,d$ are independent and simple convex constraints with closed-form projection formulas. These constraints can be easily removed by softmax and trigonometric parameterizations without compromise on smoothness. 

Once an approximate solution to \eqref{mesobomo} is obtained, a policy $\pi\in\Pi(d)$ can be retrieved by defining a set-valued mapping $\Pi$ to retrieve policies from an occupation measure $d$ by normalization: given $d\in\mathbb{R}^{SA(T+1)}$ with $d\ge 0$, $\pi_t(a|s)=\frac{d_{t,s,a}}{\sum_{a'\in\mathcal{A}}d_{t,s,a'}}$ when $\sum_{a'\in\mathcal{A}}d_{t,s,a'}>0$, and $\pi_t(\cdot|s)$ is an arbitrary probability vector in $\Delta(\mathcal{A})$ otherwise.

\begin{remark}
Evidently, smoothness of $r_t$, $r_t^{(k)}$ and $P_t$ ($t=0,\dots,T$, $k=1,\dots,K$) in $L_t$ are inherited by the \texttt{MESOB-OMO} objective $f^{\text{MESOB-OMO}}(y,z,d;\lambda_1,\lambda_2,\rho_1,\rho_2)$. Thus MESOB-OMO can be solved by a large number of existing optimization algorithms, such as projected gradient descent, provided that the rewards and dynamics are smooth in $L_t$.  
\end{remark}

Now, denote $f^\star(\lambda_1,\lambda_2,\rho_1,\rho_2)$ as the optimal objective value of \eqref{mesobomo}. It is connected with the original dual objective target in \eqref{def:pareto}, as follows.

\begin{theorem}\label{mesobomo-vs-pareto}
Suppose that the link function $F$ in \eqref{V_formula} is Lipschitz continuous, and that for any $t=0,\dots,T$ and $k=1,\dots,K$, $P_t(\cdot|\cdot,\cdot,L_t)$, $r_t(\cdot,\cdot,L_t)$, and $r_t^{(k)}(\cdot,\cdot,L_t)$ are Lipschitz continuous in $L_t$. Then, there exists a constant $C>0$ that depends only on the problem data, such that for any target tolerance $\epsilon>0$, if one sets $\rho:=\min\{\rho_1,\rho_2\}>C\frac{\lambda_2^2\max\{\lambda_1,\lambda_2,1\}}{\epsilon^2}$ and compute a feasible solution $(y,z,d$) for \eqref{mesobomo} with $f^{\texttt{MESOB-OMO}}(y,z,d;\lambda_1,\lambda_2,\rho_1,\rho_2)-f^\star(\lambda_1,\lambda_2,\rho_1,\rho_2)\le\epsilon$, then any policy $\pi\in\Pi(d)$ is $(\epsilon/\lambda_1, \epsilon/\lambda_2)$-Pareto efficient for \mesob. 
\end{theorem}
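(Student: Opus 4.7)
The plan is to exhibit \eqref{mesobomo} as a tight Lagrangian relaxation of the scalarized MESOB objective $\pi\mapsto -\lambda_1 V(\Gamma(\pi))+\lambda_2\,\text{Expl}(\pi)$, and then deduce $(\epsilon/\lambda_1,\epsilon/\lambda_2)$-Pareto efficiency from a sandwich estimate followed by a one-line contradiction. The sandwich has two sides: an \emph{upper} bound on $f^\star$ in terms of the scalarization at any test policy $\pi^\ast$, and a \emph{lower} bound tying the MESOB-OMO value of an $\epsilon$-approximate minimizer to the scalarization at the recovered policy $\pi\in\Pi(d)$.

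For the upper side, fix $\pi^\ast\in\mathcal{M}$ and construct an aligned feasible triple $(y^\ast,z^\ast,d^\ast)$ by taking $d^\ast:=\Gamma(\pi^\ast)$, setting $y^\ast_{t,\cdot}$ to the best-response value functions $V^{\text{BR}}_{t+1}(\cdot;L^\ast)$ against $L^\ast:=d^\ast$ (with $y^\ast_{T,\cdot}=-V^{\text{BR}}_0(\cdot;L^\ast)$, as dictated by the $t=0$ summand of $h^{\text{BR}}$), and $z^\ast_{t,s,a}=V^{\text{BR}}_t(s;L^\ast)-Q^{\text{BR}}_t(s,a;L^\ast)\ge 0$. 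Direct substitution shows $g^{\text{CS}}(d^\ast)=0$ and $h^{\text{BR}}(y^\ast,z^\ast,d^\ast)=0$; \eqref{Vk_expansion} gives $V(d^\ast)=V(\Gamma(\pi^\ast))$; and the performance-difference lemma for the best-response MDP against $L^\ast$ yields $z^{\ast\top} d^\ast=\text{Expl}(\pi^\ast)$. Hence $f^\star(\lambda_1,\lambda_2,\rho_1,\rho_2)\le -\lambda_1 V(\Gamma(\pi^\ast))+\lambda_2\,\text{Expl}(\pi^\ast)$ for every $\pi^\ast$.

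For the lower side, any $\epsilon$-approximate minimizer $(y,z,d)$ satisfies $-\lambda_1 V(d)+\lambda_2 z^\top d+\rho_1 g^{\text{CS}}(d)+\rho_2 h^{\text{BR}}(y,z,d)\le f^\star+\epsilon$; combining with the upper bound applied to any NE together with the uniform bounds on $|V(d)|$ and $z^\top d$ inherited from the simplex constraints on $\vec d_t$, the $\ell_1$ cap in $Z$, and reward boundedness, yields $\rho_1 g^{\text{CS}}(d)+\rho_2 h^{\text{BR}}(y,z,d)\le C_0(\lambda_1+\lambda_2+\epsilon)$, so both residuals are $O((\lambda_1+\lambda_2+\epsilon)/\rho)$. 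Let $\pi\in\Pi(d)$ and $\tilde L:=\Gamma(\pi)$. A Grönwall-style induction on $t$ using Lipschitzness of $P_t$ in its mean-field argument bounds $\max_t\|\vec d_t-\tilde L_t\|_1\le C_1\sqrt{g^{\text{CS}}(d)}$, and Lipschitzness of $F$ and of the $r^{(k)}_t$ then gives $|V(d)-V(\tilde L)|\le C_2\sqrt{g^{\text{CS}}(d)}$. Propagating the Bellman residuals encoded in $h^{\text{BR}}$ backwards against $\tilde L$ shows $y_{t,\cdot}$ is within $O(\sqrt{g^{\text{CS}}}+\sqrt{h^{\text{BR}}})$ of the true best-response value functions against $\tilde L$; another invocation of the performance-difference lemma then yields $|\text{Expl}(\pi)-z^\top d|\le C_3(\sqrt{g^{\text{CS}}}+\sqrt{h^{\text{BR}}})$. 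Chaining everything and taking $\rho>C\lambda_2^2\max\{\lambda_1,\lambda_2,1\}/\epsilon^2$ with a universal $C$ large enough absorbs the perturbation error into $\epsilon$, yielding $-\lambda_1 V(\Gamma(\pi))+\lambda_2\,\text{Expl}(\pi)\le -\lambda_1 V(\Gamma(\pi^\ast))+\lambda_2\,\text{Expl}(\pi^\ast)+2\epsilon$ for every $\pi^\ast\in\mathcal{M}$. The Pareto claim is then immediate: any $\pi^\ast$ dominating $\pi$ by margins $\epsilon/\lambda_1$ and $\epsilon/\lambda_2$ (one strictly) would make the right-hand side strictly smaller than the left-hand side, a contradiction.

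The main obstacle is the Bellman perturbation bound $|\text{Expl}(\pi)-z^\top d|\le C(\sqrt{g^{\text{CS}}}+\sqrt{h^{\text{BR}}})$: the variables $(y,z)$ returned by \eqref{mesobomo} satisfy Bellman identities only approximately and only against the \emph{virtual} flow $\vec d$ rather than the \emph{realized} flow $\tilde L=\Gamma(\pi)$, so one must run an interleaved backward induction that simultaneously controls the drift of $y$ from the true value function against $\tilde L$ (using $h^{\text{BR}}$ together with the Lipschitz transfer from $\vec d$ to $\tilde L$) and the drift of $z$ from the true advantage gap, before the performance-difference lemma converts pointwise residuals into an expected regret comparable with $z^\top d$. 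Everything else (the Phase-1 construction, the marginal transport via Grönwall, and the final scalar-optimization contradiction) is routine Lipschitz bookkeeping and explains why only the Lipschitz hypotheses on $F$, $P_t$, $r_t$, and $r^{(k)}_t$ are needed.
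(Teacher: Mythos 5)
Your architecture is essentially the paper's own: (i) an upper bound on $f^\star$ obtained by constructing, from any test policy $\pi^\ast$, an exactly feasible triple $(y^\ast,z^\ast,d^\ast)$ with zero residuals and complementarity $z^{\ast\top}d^\ast=\text{Expl}(\pi^\ast)$ (the paper imports exactly this construction from Theorem 9 of \citet{guo2022mf}); (ii) a penalty argument showing the residuals $g^{\text{CS}}$ and $h^{\text{BR}}$ of any $\epsilon$-suboptimal feasible point are $O((\lambda_1+\lambda_2+\epsilon)/\rho)$ (the paper's Lemma \ref{mesob-omo-penalty-appendix}); (iii) a Lipschitz flow-comparison transferring $V(d)$ and $z^\top d$ to $V(\Gamma(\pi))$ and $\text{Expl}(\pi)$ for $\pi\in\Pi(d)$ (the paper's Lemma \ref{zL_vs_expl}); and (iv) a contradiction against a hypothetical dominating policy. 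The paper simply organizes the contradiction directly at the level of the constrained problem \eqref{mesobomo-constr} and relegates the finite-$\rho$ bookkeeping to those two lemmas, which is precisely what you flesh out, so the routes coincide.

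The one step to push back on is the estimate you single out as the ``main obstacle'': the two-sided bound $|\text{Expl}(\pi)-z^\top d|\le C(\sqrt{g^{\text{CS}}}+\sqrt{h^{\text{BR}}})$, justified by arguing that the solver's $y$ must be within $O(\sqrt{g^{\text{CS}}}+\sqrt{h^{\text{BR}}})$ of the best-response value functions against $\tilde L=\Gamma(\pi)$. That is not true for a general (even exactly feasible) point: the constraint $A_d^\top y+z=c_d$ with $z\ge 0$ only makes $y$ a feasible dual (a supersolution of the Bellman system), so one can inflate $y$, and with it $z$ and $z^\top d$, arbitrarily above $\text{Expl}(\pi)$ while keeping both residuals zero; closeness of $y$ to the value function holds only for the specially constructed $(y^\pi,z^\pi)$, not for the output of \eqref{mesobomo}. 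Fortunately your chain only requires the direction that is true, namely $z^\top d\ge \text{Expl}(\pi)-O(\sqrt{g^{\text{CS}}}+\sqrt{h^{\text{BR}}})$ together with $V(d)\le V(\Gamma(\pi))+O(\sqrt{g^{\text{CS}}})$: this is exactly the first claim of the paper's Lemma \ref{zL_vs_expl}, proved via $z^\top d=c_d^\top d-b^\top y+O(\cdot)$ and the one-sided inequality $\mu_0^\top y_T\le -V^\star_{\mu_0}(\Gamma(\pi))+O(\cdot)$, with no need for $y$ to approximate the value function. Replacing your two-sided claim by this one-sided lemma removes the obstacle and makes your proof match the paper's.
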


This theorem guarantees the flexibility of balancing between the social optimality and the Nash equilibria, by choosing $\lambda_1$ and $\lambda_2$ appropriately. Moreover, in order to obtain a target $(\epsilon_1,\epsilon_2)$-Pareto efficient solution to \mesob, one can first set $\lambda_1=\epsilon/\epsilon_1$, $\lambda_2=\epsilon/\epsilon_2$ and $\rho_1,\rho_2=\Omega\left(\max\left\{\frac{\epsilon}{\epsilon_2^2\min\{\epsilon_1,\epsilon_2\}},\frac{1}{\epsilon_2^2}\right\}\right)$, then solve \mesobomo\ to obtain an $\epsilon$-sub-optimality. 

One can further characterize the solution to \mesobomo\ with respect to the trade-off ratio $\lambda_1/\lambda_2$, which, as a by-product, also connects \eqref{mesobomo} with the original optimization problem \eqref{bilevel}. 

\begin{theorem}\label{mesob-omo-asymptotic}
Given the same assumptions in Theorem \ref{mesobomo-vs-pareto} on $F,\,P_t,\,r_t$ and $r_t^{(k)}$, consider a sequence of 
constants $\{\lambda_1^l,\lambda_2^l,\epsilon^l\}_{l=0}^\infty$. For any  $l \in \{0,1,\ldots\}$, let $\rho^l:=2\max\{(\lambda_1^l)^2,\lambda_1^l\lambda_2^l,\lambda_1^l,\lambda_2^l,1\}/\epsilon^l$ and $(y^l,z^l,d^l)$ be a feasible solution to \eqref{mesobomo} with 
$f^{\texttt{MESOB-OMO}}(y^l,z^l,d^l;\lambda_1^l,\lambda_2^l,\rho^l,\rho^l)-f^\star(\lambda_1^l,\lambda_2^l,\rho^l,\rho^l)\leq \epsilon^l$. Then the following results hold. 
\begin{itemize}
    \item If  $\lim\limits_{l\rightarrow\infty}\frac{\lambda_1^l}{\lambda_2^l}=0$, $\lim\limits_{l\rightarrow\infty}\epsilon^l=0$, and $\inf\limits_{l\geq 0}\lambda_1^l>0$ then for any limit point $\bar{d}$ of $d^l$ and any $\bar{\pi}\in\Pi(\bar{d})$, $\bar{\pi}$ maximizes $V(\Gamma(\pi))$ over all NE policies $\pi$ with $\text{Expl}(\pi)=0$.  
    \item If  $\lim\limits_{l\rightarrow\infty}\frac{\lambda_1^l}{\lambda_2^l}=\infty$, $\lim\limits_{l\rightarrow\infty}\epsilon^l=0$, and $\inf\limits_{l\geq 0}\lambda_2^l>0$, then for any limit point $\bar{d}$ of $d^l$ and any $\bar{\pi}\in\Pi(\bar{d})$, $\bar{\pi}$ minimizes $\text{Expl}(\pi)$ over all socially optimal policies that maximizes $V(\Gamma(\pi))$.   
\end{itemize} 
\end{theorem}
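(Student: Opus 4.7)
The plan is to combine a compactness argument with the penalty-method logic built into MESOB-OMO. Since $d^l$ lies in the compact product of simplices $\Delta(\mathcal{S}\times\mathcal{A})^{T+1}$ and $y^l, z^l$ live in the bounded sets prescribed in \eqref{mesobomo}, any such sequence has a convergent subsequence, and it suffices to establish each claim for an arbitrary subsequential limit $(\bar y, \bar z, \bar d)$ and then transfer the conclusion to any $\bar\pi \in \Pi(\bar d)$. The engine of the proof will be a reference-point upper bound on $f^\star(\lambda_1^l, \lambda_2^l, \rho^l, \rho^l)$. In case 1, I would pick an NE policy $\pi^\dagger$ that maximizes $V(\Gamma(\cdot))$ over all NE policies and, using the MESOB-OMO reformulation derived in the appendix, exhibit $(y^\dagger, z^\dagger, d^\dagger)$ with $d^\dagger = \Gamma(\pi^\dagger)$, $g^{\text{CS}}(d^\dagger) = 0$, $h^{\text{BR}}(y^\dagger, z^\dagger, d^\dagger) = 0$, and $(z^\dagger)^\top d^\dagger = 0$, which yields $f^\star \le -\lambda_1^l V(\Gamma(\pi^\dagger))$. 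In case 2, I will analogously take a socially optimal $\pi^\ddagger$ that also minimizes $\text{Expl}(\pi)$ among socially optimal policies, constructing $(y^\ddagger, z^\ddagger, d^\ddagger)$ with zero penalty terms and $(z^\ddagger)^\top d^\ddagger = \text{Expl}(\pi^\ddagger)$, giving $f^\star \le -\lambda_1^l V(\Gamma(\pi^\ddagger)) + \lambda_2^l \text{Expl}(\pi^\ddagger)$.

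Combining these upper bounds with $f^{\texttt{MESOB-OMO}}(y^l, z^l, d^l) - f^\star \le \epsilon^l$ and the uniform boundedness of $|V|$ on the feasible set, I rearrange to see that $\rho^l \bigl(g^{\text{CS}}(d^l) + h^{\text{BR}}(y^l, z^l, d^l)\bigr)$ is controlled by a constant multiple of $\epsilon^l + \lambda_1^l + \lambda_2^l\,\text{Expl}(\pi^\ddagger)$ (with the exact combination depending on the case). Under the prescribed scaling $\rho^l = 2\max\{(\lambda_1^l)^2, \lambda_1^l\lambda_2^l, \lambda_1^l, \lambda_2^l, 1\}/\epsilon^l$, each of these contributions, once divided by $\rho^l$, vanishes, so by continuity $g^{\text{CS}}(\bar d) = 0$ and $h^{\text{BR}}(\bar y, \bar z, \bar d) = 0$. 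The MESOB-OMO-to-NE correspondence then identifies $(\bar y, \bar z, \bar d)$ as a consistent mean-field flow equipped with value-function/advantage data for the induced MDP, so that $\bar z^\top \bar d$ equals $\text{Expl}(\bar\pi)$ for any $\bar\pi \in \Pi(\bar d)$.

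The two cases now diverge in the extra information extracted. In case 1, the same rearrangement also yields $\lambda_2^l (z^l)^\top d^l \to 0$; since $\lambda_2^l \to \infty$ (forced by $\lambda_1^l/\lambda_2^l \to 0$ together with $\inf_l \lambda_1^l > 0$), we get $\bar z^\top \bar d = 0$, so $\bar\pi$ is an NE. Dropping non-negative terms gives $\lambda_1^l(V(\Gamma(\pi^\dagger)) - V(d^l)) \le \epsilon^l$, and dividing by $\lambda_1^l \ge \inf_l \lambda_1^l > 0$ drives $V(\Gamma(\bar\pi)) \ge V(\Gamma(\pi^\dagger))$, hence equality. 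In case 2, dividing $\lambda_1^l(V(\Gamma(\pi^\ddagger)) - V(d^l)) \le \epsilon^l + \lambda_2^l \text{Expl}(\pi^\ddagger)$ by $\lambda_1^l$ and using $\lambda_2^l/\lambda_1^l \to 0$ and $\epsilon^l/\lambda_1^l \to 0$ (both because $\lambda_1^l \to \infty$) gives $V(\Gamma(\bar\pi)) \ge V(\Gamma(\pi^\ddagger))$, so $\bar\pi$ is socially optimal; a symmetric rearrangement bounds $\bar z^\top \bar d$ by $\text{Expl}(\pi^\ddagger)$, yielding the minimum-exploitability conclusion. The hardest step will be the OMO-to-policy correspondence at the limit: both constructing the reference representations with zero penalties and the claimed $z^\top d$ values, and verifying that $\bar z^\top \bar d$ still coincides with $\text{Expl}(\bar\pi)$ when $\sum_a \bar d_{t,s,a}$ vanishes for some $(t,s)$, since $\Pi$ is set-valued there. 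This hinges on the MESOB-OMO derivation deferred to the appendix and on using $\mu_0 > 0$ to propagate mass so that the ambiguity in $\Pi$ does not affect the exploitability value.
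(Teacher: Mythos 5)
Your overall strategy (compactness, reference feasible points built from the equilibrium-selection / social-equalizing optima, penalty bounds forcing $g^{\text{CS}}(\bar d)=h^{\text{BR}}(\bar y,\bar z,\bar d)=0$, then the OMO-to-policy correspondence) is the same as the paper's, and your treatment of the first case is essentially correct: there the cross term is $\frac{\lambda_1^l}{\lambda_2^l}\bigl(V(d^l)-V(\Gamma(\pi^\dagger))\bigr)$, which is killed by crude boundedness of $V$ because $\lambda_1^l/\lambda_2^l\to 0$ (your claim that $\lambda_2^l(z^l)^\top d^l\to 0$ is stronger than what the rearrangement gives, but only $(z^l)^\top d^l\to 0$ is needed; likewise you only need $\bar z^\top\bar d\ge \mathrm{Expl}(\bar\pi)$ from Lemma \ref{zL_vs_expl}, not equality).

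The genuine gap is in the second case, where your ``symmetric rearrangement'' is not symmetric. Dividing by $\lambda_2^l$ you get
$(z^l)^\top d^l \le \frac{\lambda_1^l}{\lambda_2^l}\bigl(V(d^l)-V^{\max}\bigr)+\mathrm{Expl}(\pi^\ddagger)+\epsilon^l/\lambda_2^l$,
and now the prefactor $\lambda_1^l/\lambda_2^l$ diverges while $V(d^l)-V^{\max}$ need not be nonpositive: $d^l$ is only \emph{approximately} consistent, so $V(d^l)$ can exceed the maximum of $V$ over consistent flows, and boundedness of $V$ no longer suffices. Closing this requires exactly the two ingredients you never invoke: the $(\lambda_1^l)^2$ and $\lambda_1^l\lambda_2^l$ components of the prescribed $\rho^l$ (your sketch only uses the generic $\max\{\lambda_1^l,\lambda_2^l,1\}$ part), which via Lemma \ref{mesob-omo-penalty-appendix} give the sharper consistency bound \eqref{consistency_lambda1_bound}, $\|A_{d^l}d^l-b\|_2^2=O(\epsilon^l/\lambda_1^l)$; and a statement like Lemma \ref{mfc_approx_consistent_ub} showing that approximate consistency caps $V(d^l)$ by $V^{\max}$ up to an error of that order, so that the cross term becomes $O(\epsilon^l)/\inf_l\lambda_2^l\to 0$ and $\bar z^\top\bar d\le \tilde z^\top\tilde d=\mathrm{Expl}(\pi^\ddagger)$ follows in the limit. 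Without this quantitative control, the minimum-exploitability conclusion does not follow from your outline. (Your closing worry about states with $\sum_a\bar d_{t,s,a}=0$ is comparatively benign: the exploitability bound of Lemma \ref{zL_vs_expl} and the consistency argument of Lemma \ref{MFCinL} already hold for every $\pi\in\Pi(\bar d)$, so the set-valuedness of $\Pi$ causes no trouble.)
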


That is, \mesobomo\ is asymptotically an equilibrium selection problem, finding an NE with the largest social objective as $\lambda_1/\lambda_2\rightarrow0$, and asymptotically a social equalizing problem, selecting a social optimal policy with the minimum exploitability as $\lambda_1/\lambda_2\rightarrow\infty$. 

\raggedbottom
\section{Case study: bidding in ad auctions}\label{bid_recommend_example}

In this section, we apply \mesob\ to bid recommendation in a simulated online pay-per-click ad auctions. Our focus is on how \mesobomo\ balances stakeholders' welfare, while managing the competition among advertisers.

\subsection{Bid recommendations for advertisers} \label{sec:bidding}

In online advertising, advertisers compete for supplies (e.g., slots on a webpage), shoppers seek high-quality content or services, and ad publisher makes a profit from advertising revenue. The ad recommendation service, as a social planner, provides bid suggestions by considering competition among advertisers, and the broader interests of shoppers, advertisers, and the publisher. Given our focus on developing a general framework for multi-level multi-agent games, our simulation assumes for simplicity that advertisers adopt the publisher's bid recommendations \footnote{In reality, advertisers may not always follow the recommendations. We may incorporate a propensity model in our optimization framework, which predicts how likely an advertiser follows the publisher's recommendation.}.

\textbf{Bidders' mean-field game.} We model the competition among bidders as a non-cooperative repeated mean-field game. At each time step $t\in\{0,1,\ldots,T\}$, a shopper makes a request (e.g., visiting a page) leading to a supply, and a set of $n$ bidders' states are sampled i.i.d. from a distribution $\mu$ to participate in an auction. Following our mean-field approximation described in \S\ref{sec:formulation}, a representative bidder at state $s_t$ places a bid $a_t \sim \pi(.|s_t)$ and receives reward $r(s_t,a_t,L_t)$, where $L_t$ is the population joint state-action distribution at $t$. Each bidder's state at time $t$ corresponds to its ad's CTR at that particular time. In this context, we assume there is no state transition.

Bidders compete through a second-price auction \citep{edelman2007internet}, where the winning bidder has the highest score defined as $\text{bid} \times \text{CTR}$. When clicked, the winner pays the second-highest score divided by the winner's CTR. In case of ties, the winner is chosen randomly and pays his placed bid. Here, $n$ is referred to as the auction density and its higher values are indicative of intensified competition. Detailed formulation of $r(s_t,a_t,L_t)$ is given in Appendix \ref{sec:reward&socialcomp}. 

\textbf{Bi-objective optimization of the bid recommendation service.} The bid recommendation service must balance the interests of all stakeholders when suggesting bids. To this end, we propose a welfare function that represents the objectives of advertisers, shoppers, and the ad publisher. We consider advertisers' primary goal to be optimizing the Return on Ad Spend (RoAS), denoted as the revenue per advertising dollar spent \footnote{As outlined in \S\ref{sec:bidding}, the rewards of bidders, $r(s_t,a_t,L_t)$, are derived from their participation in auctions. Different bidders may prioritize varying objectives for their advertising campaigns such as maximizing expected cumulative rewards or maximizing their campaigns' Return on Advertising Spend (RoAS). In this study, we employ $r(s_t,a_t,L_t)$ in the context of bidders' mean-field game for estimating the Nash equilibrium, and use RoAS to signify the advertiser's interest within the overall welfare component.}. On the other hand, the publisher's objective is revenue maximization, measured as the sum of Cost-per-Click (CPC) -- the payment received from the winning bidder when an ad is clicked. Shoppers are primarily interested in ads relevant to their search intent. In this study, we use CTR as a surrogate for shopper satisfaction.

We define $\textsc{CTR}(s,a,L)$, $\textsc{SALE}(s,a,L)$, and $\textsc{CPC}(s,a,L)$ as the expected CTR, Sale, and CPC of the representative bidder at state $s$, taking action $a$, considering mean field $L=\{L_t\}_{t\in \mathcal{T}}$. Explicit formulas are provided in Appendix \ref{sec:reward&socialcomp}. Omitting $L_t$ arguments for brevity, the market average of these entities are as follows:
{\footnotesize{
\[
V^{(1)}(L) \!=\! \sum_{t,s,a} L_t \cdot \textsc{CTR}(s,a,L_t),\ V^{(2)}(L) \!=\! \sum_{t,s,a} L_t \cdot \textsc{SALE}(s,a,L_t),\ V^{(3)}(L) \!=\! \sum_{t,s,a} L_t \cdot \textsc{CPC}(s,a,L_t).
\]
}}
The welfare function is defined as $V(L) = F(V^{(1)}(L),V^{(2)}(L),V^{(3)}(L)),$
where 
\begin{equation}\label{eq:welfare}
    \small
    F(V^{(1)}, \! V^{(2)}, \! V^{(3)})=c_1 \ln\!\left(V^{(1)}+\epsilon_1\right) \!
    +c_2 \ln\!\left(\frac{V^{(2)}}{V^{(1)}\times V^{(3)}+\epsilon_0}+\epsilon_2\right) \! +c_3 \ln\!\left(V^{(1)}\times V^{(3)}+\epsilon_3\right).
\end{equation}
The first term in \eqref{eq:welfare} denotes the interests of shoppers, while the second and third terms respectively refer to the interests of advertisers and the publisher. The positive scalar values, $\epsilon_i$'s, are used to prevent division by zero or taking logarithm of zero. The coefficients $c_i$ are positive.

The bid recommendation service operates as a social planner with the goal of solving a bi-objective optimization problem. It seeks a bidding policy $\pi \in \mathcal{M}$, which maximizes the welfare function $V(\Gamma(\pi))$, whilst ensuring competitive bids approximate a Nash equilibrium \footnote{As described in \S\ref{sec:formulation}, $\Gamma(\pi)$ denotes the distribution flow under policy $\pi$, with $\Gamma(\pi)(s,a) = \mu(s) \pi(a|s)$, given our assumption of no state transition in our bid recommendation use case (refer to \eqref{eq:gamma} for more details).}. We have employed our \mesob\ framework to model this application, in which the bid recommendation service explores for an $(\epsilon_1, \epsilon_2)$-Pareto efficient bidding policy, as defined in \eqref{def:pareto}. The forthcoming section will present the application of our MESOB-OMO formulation \eqref{mesobomo} in estimating this bidding policy.

\paragraph{An equilibrium-agnostic heuristic.}
For comparison with MESOB, we first consider a heuristic that determines bid recommendations based on a window of past auctions, without considering equilibria.
This heuristic observes the population bidding and CTR distributions at each timestep, simulates a certain number of second-price auctions to approximate the distribution of winning bids, and then proposes the 25th and 75th percentiles of this distribution as the lower and upper bidding ranges, respectively.\footnote{Like MESOB, this heuristic assumes all bidders comply with the bid recommendations.} A comprehensive description of this heuristic is provided in Appendix \S\ref{sec:empirical:baseline}. 

We then further apply existing algorithms that estimate the mean-field equilibrium within a competitive mean-field game environment, without considering social welfare. They include Mean-Field Occupation Measure Optimization (MF-OMO) \citep{guo2022mf}, Online Mirror Descent (OMD) \citep{perolat21}, and Fictitious Play (FP) \citep{perrin2020fictitious}. We then evaluate and compare the interests of all stakeholders under the equilibrium bidding policy versus the agnostic bidding heuristic. Our experiments show that the mean-field equilibrium bidding policy increases competition at bidding, potentially benefiting the ad publisher and shoppers at the expense of advertisers. 
For example, MF-OMO results in a lower exploitability of $0.0046$, compared with the heuristic policy's $0.2643$, a $98\%$ improvement.
Details are in Appendix \ref{sec:empirical:baseline}.
This significant reduction underscores the efficiency of MF-OMO in generating a bidding policy approximating an equilibrium, as expected.

Next, we solve \eqref{mesobomo} for our bidding application, and derive \mesob\ bidding policy that effectively navigates the welfare trade-off among all stakeholders while being in close proximity to the mean-field Nash equilibrium. 

\subsection{Experiment results with \mesob}
\label{exp:mesob}

We will now present experimental results for \mesob. We will implement \mesobomo\ using MFGlib \citep{guo2023mfglib}. Our discussion commences with an overview of the Pareto efficiency curve, corresponding to various $\lambda_1$ and $\lambda_2$ configurations of the \mesob\ parameters. This curve demonstrates the balance between the two main objectives in our bidding use case: overall social welfare and exploitability. Subsequently, we draw a Pareto curve for each constituent of our social welfare: shoppers, bidders, and the ad publisher. We will conclude this section by presenting \mesob\ bidding policy for a particular parameter setup.

In our experiments, we utilize a tabular setting where a bidder's state $s\in \mathcal{S}$ and action $a \in \mathcal{A}$ correspond to her CTR and placed bid, respectively. The state space $\mathcal{S}$ is derived by discretizing $[0.2,0.6]$ with equal spacing, and action space $\mathcal{A}$ by discretizing $[0,a_{max}]$ with equal spacing, where $a_{max}$ denotes the maximum bid allowed, set at $a_{max}=\$5$ in all experiments. We initialize parameters as follows: $n=5$, $\text{CTR}\in \{0.2,0.4,0.6\}$, $\text{bid}\in\{0, 1.25, 2.5, 3.75, 5\}$ (in dollars), $n_s=3$, $n_a=5$, and utility $v=2$\footnote{As detailed in \S\ref{sec:bidding}, bidders are anonymous with a known fixed utility reflecting their willingness to pay for an ad slot. The representative bidder's utility, denoted as $v$, can be approximated by $\text{price} \times \text{CVR}$, where $\text{price}$ signifies the average price of an advertised product and conversion rate $\text{CVR}$ represents the sale likelihood. We assume that $\text{price} \times \text{CVR}$ is known or can be estimated from offline data if necessary.}. The $\epsilon_i$ and $c_i$ values in \eqref{eq:welfare} are assigned as $\epsilon_i=10^{-5}, c_i=1/3$ for all $i$. We fine-tune the penalization parameters to $\rho_1=1$ and $\rho_2=0.1$ in \eqref{mesobomo}. Lastly, the lack of randomness in our set up negates the need for averaging across multiple runs. 
%
\paragraph{Pareto efficiency curves.}
Figures~\ref{fig:pareto_overall} and \ref{fig:pareto_component} present the Pareto efficiency curve in relation to our two objectives: minimizing exploitability and maximizing social welfare. Figure\ref{fig:pareto_overall} illustrates the overall welfare, while Figure~\ref{fig:pareto_component} represents each component of the welfare, encompassing the welfare of shoppers, bidders, and the ad publisher. Throughout, we have varied the MESOB parameters, $\lambda_1$ and $\lambda_2$ in \eqref{mesobomo}. We observe that as   $\lambda_1/\lambda_2$ increases (i.e., prioritizing social welfare), the ad publisher welfare increases, bidders' welfare declines, and shoppers' welfare experiences a slight incline, collectively leading to an increase in overall welfare. The result shows that MESOB adeptly navigates the balance between collaboration and competition inherent in our bidding use case.

In a pure competition scenario, where $\lambda_1=0$ and $\lambda_2=1$ (marked by the red data point in both plots), MESOB generates an optimal bidding policy with the lowest exploitability and social welfare. This scenario is computed using MF-OMO algorithm to solve the corresponding mean field game\footnote{We employed other established algorithms including Online Mirror Descent (OMD) \citep{perolat21} and Fictitious Play (FP) \citep{perrin2020fictitious} to solve this mean-field game. Exploitability and social welfare values were consistent with those achieved through MF-OMO.}. Conversely, in a pure collaboration scenario, akin to general utility mean field control (MFC), where $\lambda_1=1$ and $\lambda_2=0$ (indicated by the black data point in both plots), MESOB mirrors the solution from the general utility MFC. This results in the highest social welfare at the expense of high exploitability. Notably, we have employed a gradient descent approach to solve the general utility MFC.

Within the MESOB framework and parameters $\lambda_1,\lambda_2\in(0,1)$, the system adeptly tailors an optimal policy balancing collaboration and competition. Each blue data point represents a \mesob\ solution, their distribution reflecting the variance with emphasis on collaboration versus competition as $\lambda_1,\lambda_2$ oscillate within the specified range. As observed from the plots, pure competition results in the lowest exploitability, while pure collaboration yields the highest social welfare, albeit with increased exploitability. \mesob\ adeptly balances these two objectives, allowing for a coexistence of collaboration and competition. For all these runs, we utilize gradient descent to solve \eqref{mesobomo}.
\begin{SCfigure}
    \centering   
    \caption{
    Pareto curve varying  $\lambda_1$ and $\lambda_2$ in \eqref{mesobomo}, trading off exploitability (x-axis) and social welfare (y-axis).
    The red point, solved by MF-OMO, corresponds to pure competition ($\lambda_1=0, \lambda_2=1$). The black point represents a scenario entirely focused on social welfare, ($\lambda_1=1, \lambda_2=0$). Blue points denote a range of configurations with $0<\lambda_1,\lambda_2<1$.}
    \label{fig:pareto_overall}
    \includegraphics[width=0.42\textwidth]
    {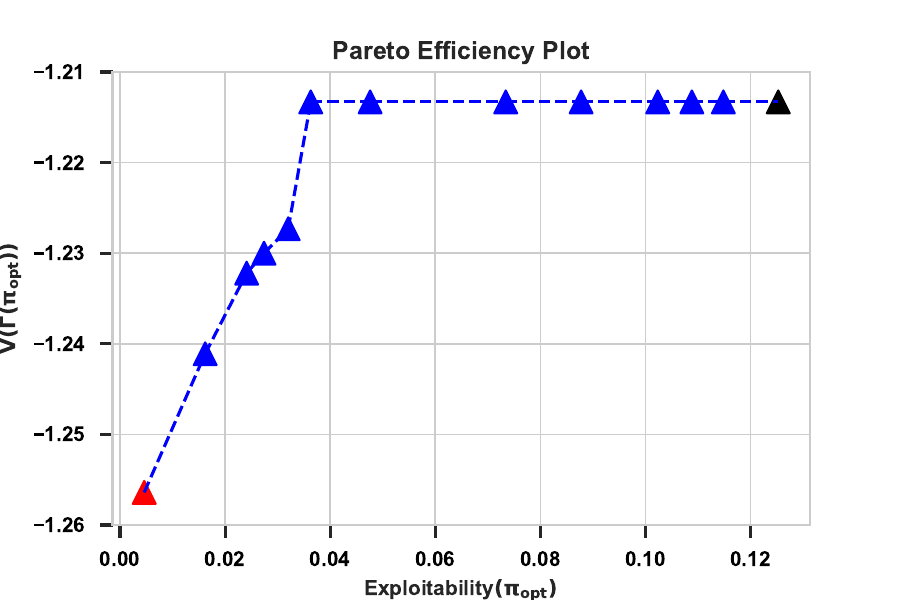}
\end{SCfigure}
\begin{figure}[htbp]
    \centering
    \includegraphics[width=0.85\textwidth]{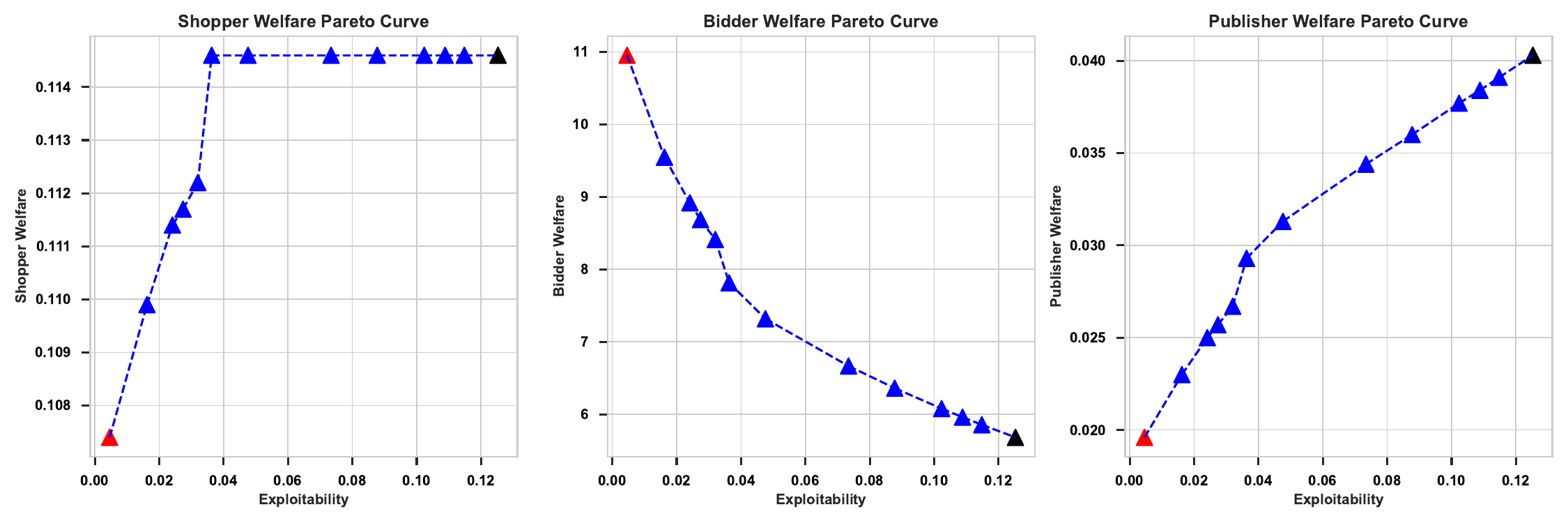}
    \caption{
    Pareto curve for welfare components (shoppers, bidders, ad publisher) under diverse MESOB parameter configurations. The NE solution ($\lambda_1=0, \lambda_2=1$) is red, while the MFC solution ($\lambda_1=1, \lambda_2=0$) is black. Blue points represent MESOB solutions for $\lambda_1,\lambda_2\in(0,1)$.}    \label{fig:pareto_component}
\end{figure}
\paragraph{\mesob\ execution.} We illustrate a particular execution of MESOB with $\lambda_1=\lambda_2=0.5$, represented by one of the blue triangles in Figure~\ref{fig:pareto_overall}.
Figure~\ref{fig:mesob_run} shows MESOB bidding policy at different states obtained by optimizing \eqref{mesobomo} for 1500 iterations.
\noindent\begin{minipage}{.5\textwidth}
    \centering
    \includegraphics[width=.9\linewidth]{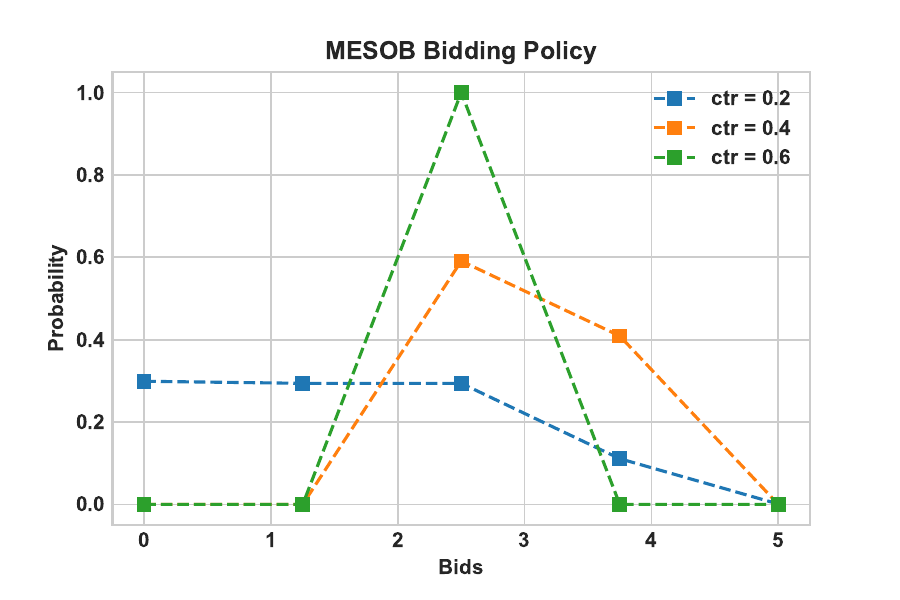}
\end{minipage}%
\begin{minipage}{.5\textwidth}
    \centering
    \captionof{figure}{\mesob\ bidding policy for $\lambda_1=\lambda_2=0.5$ (\eqref{mesobomo}). It outlines bid recommendations and their probabilities at different click-through rates (ctr). At $ctr=0.6$, a $\$2.5$ bid is suggested, while at $ctr=0.4$, bids of $\$2.5$ with probability $0.59$ and $\$3.75$ with probability $0.41$ are suggested. At $ctr=0.2$, bids of $\$0,\$1.25,\$2.5,\$3.75$ are proposed with respective probabilities of $0.3, 0.29, 0.29, 0.11$.}
    \label{fig:mesob_run}
\end{minipage}

\raggedbottom
\section{Conclusions}
\label{conclusion}

In applications such as bid recommendations for online ad auctions, it is crucial that recommended bids account for the competition among advertisers for online ad impressions, as well as the interests of the ad publisher, shoppers, and advertisers. A bid recommendation service seeks to strike an optimal balance among all stakeholders' interests within its multi-level multi-agent game system. Numerous applications exist where such a complex interplay of competition and collaboration coexists. Traditional social optimality or equilibrium concepts alone fall short in addressing this interplay. This paper proposes the novel \mesob\ (\textbf{M}ean-field \textbf{E}quilibria \& \textbf{S}ocial \textbf{O}ptimality \textbf{B}alancing) framework and the \mesobomo\ tractable formulation by adopting the idea of occupation measure optimization (OMO). These tools effectively navigate the intricate balance between competition and cooperation in multi-level multi-agent systems, employing mean-field techniques to address the large volume of interactions. We establish convergence guarantees for \mesobomo\ and derive its asymptotic relationship with the classical equilibrium selection and social equalization problems. The empirical efficacy of this approach is showcased through the estimation \mesob\ bidding policies that balance the interests of all stakeholders while addressing the competition among advertisers in a simulated pay-per-click ad auction environment.

\section{Acknowledgments}
\label{ack}

The authors appreciate Ben Allison, Nikhil Devanur, Barath Ezhilan, Aditya Maheshwari, Muthu Muthukrishnan, Daniel Oliveira, Sergio Rodriguez, Koushiki Sarkar, Amin Sayedi, Kiarash Shaloudegi, and Ziyang Tang for insightful feedback and support.

\bibliography{refs}
\bibliographystyle{abbrvnat}

\appendix 

\section*{Appendix}
\addcontentsline{toc}{section}{Appendix}

\section{Reward formulation in bidders' mean-field game} 
\label{sec:reward&socialcomp}

Here we provide closed-form expressions for the bidder's reward $r(s_t,a_t,L_t)$ and the social welfare components $V^{(i)}$ ($i=1,2,3$) as described in \S\ref{sec:bidding}. 

Let $\zeta$ represent the set of all possible scores, defined as $\zeta:=\{z \mid z=s\times a, \forall s\in\mathcal{S}, a\in\mathcal{A}\}$. Scores are ordered such that $z_0\leq z_1\leq \dots \leq z_{max}$. Furthermore, let $\lambda_t(z)$ denote the probability of observing a score $z$ as $\lambda_t(z)=\sum_{(s',a'):\ s'a'=z} L_t(s',a')$, and $\Lambda_t(z)$ represent the likelihood of observing a score less than $z$, defined as $\Lambda_t(z)= \sum_{j=0}^{l-1} \lambda_t(j)$. Here, the score $z=s\times a$ is assumed to be the $l$-th element in $\zeta$ (i.e., $z_l=z$). For simplicity, we discretize both \text{bid} and \text{CTR} values to ensure that any score $z$ corresponds to a unique $(s,a)$ pair such that $z=s\times a$, so $\lambda_t(z)=L_t(s,a)$.

In each auction, the representative bidder competes against $(n-1)$ other bidders. At any time $t$, given the mean-field flow $L_t$, the probability of winning the auction for the representative bidder at state $s$ and placing bid $a$ can be determined in two distinct scenarios. The first scenario occurs when there is no tie (a single winner). For simplicity, we assume that the representative bidder score $z=s\times a$ is the $l$-th element among the ordered scores, and the winning probability in this case can be obtained by summing over all cases where the second highest score ($z_j$) is any score $z_0\le z_j<z_l$. The probability in this scenario is given by $P_1$ in \eqref{eq:win_prob}. The second scenario arises when more than one bidder shares the highest score. To obtain the winning probability in this case, we need to sum over all possibilities of $1$ up to $(n-1)$ bidders sharing the same highest score as given in $P_2$.
\begin{equation}\label{eq:win_prob}
P_1 = \sum_{j=0}^{l-1} \big(\Lambda_t(z_{j+1})^{n-1}-\Lambda_t(z_j)^{n-1}\big), \quad
P_2 = \sum_{i=1}^{n-1} \frac{1}{i+1} \ \binom{n-1}{i}\ \Lambda_t(z)^{n-i-1}\ \lambda_t(z)^i.
\end{equation}
Next, we compute the expected values of CTR, CPC, and sale for the representative bidder using winning probabilities in \eqref{eq:win_prob}. If a click occurs (with probability $s$), the bidder pays $\frac{z_j}{s}$ for a solo win, or his bid $a$ if there are multiple winners. Therefore, 
{\footnotesize{
\[\textsc{CTR}(s,a,L_t) = (P_1+P_2)\times s, \ \textsc{CPC}(s,a,L_t) = P_1\times z_j + P_2\times a\times s, \ \textsc{SALE}(s,a,L_t) = (P_1+P_2)\times v \times s.
\]
}}
The reward of the representative bidder at $(s,a)$ pair is then:

\begin{equation}\label{eq:reward}
r(s,a,L_t) = \textsc{SALE}(s,a,L_t)-\textsc{CPC}(s,a,L_t), \quad \forall s,a,t.
\end{equation}

\section{Agnostic bidding heuristic}
\label{sec:empirical:baseline}
In this section, we present a family of basic bidding heuristic policies. These policies rely solely on auction history to propose bids likely to secure auction wins. Notably, these bidding heuristics are agnostic -- they neither estimate mean-field equilibrium bids nor consider social welfare in their recommendations. We implement these heuristics beginning with an initial population CTR distribution ($\mu_t$) and initial population bid distribution ($\alpha_t$). To maintain consistency with our repeated game model detailed in \S\ref{sec:bidding}, we assign $\mu_t=\mu, \forall t\in\mathcal{T}$.

At each time step $t$, our heuristic policy observes the population bidding distribution ($\alpha_t$) and the CTR distribution ($\mu$) and simulates $\kappa$ number of second-price auctions to estimate the winning bids. Specifically, in each auction, it samples $n$ bids from $\alpha_t$ and $n$ CTRs from $\mu$ and run a second price auction where the winner is the bidder with the highest score and pays the second highest bid adjusted by her CTR. Our agnostic heuristic bidding policy then recommends the 25th percentile and the 75th percentile of the winning bids as the lower and the upper bid ranges to bidders (referred to as bid range $[a,b]$ in Algorithm \ref{alg:baseline}). We assume that every bidder adopts the bid recommendation and bids uniformly in that range. Hence, population bidding distribution gets updated with step size $\eta$ to get closer to the target policy $\bar{\alpha}_t\sim \text{Uniform}[a, b]$. We assume that the bidders' utility ($v$) is fixed and known a priori. Algorithm \ref{alg:baseline} describes our heuristic policy.
\begin{algorithm}
\begin{algorithmic}
   \STATE {\bfseries Input:} Initialize population initial bidding distribution $\alpha_0$, population \text{CTR} distribution $\mu$, auction density $n$, utility $v$, time horizon $T$, number of intermediary auctions $\kappa>0$, and  step size $\eta \in[0,1]$
   
   \FOR{$t=1,2,\dots,T$}
   \STATE step 1: simulate $\kappa$ second price auctions to estimate winning bid distribution given $\alpha_{t-1}$ and $\mu$.
   \STATE step 2. let $a,b$ denote $25^{th}$ and $75^{th}$ percentiles of the winning bids and recommend $[a,b]$.
   \STATE step 3. let $\bar{\alpha}_t\sim \text{Uniform}[a,b]$ and update $\alpha_t=\alpha_{t-1}+\eta(\bar{\alpha}_t-\alpha_{t-1})$ with step size $\eta$.
   \ENDFOR
    \caption{Agnostic bidding heuristic (no equilibrium or social welfare)}
   \label{alg:baseline}
\end{algorithmic}
\end{algorithm}

For the following experiment, we set $\text{CTR}\in[0.01,1]$ and $\text{bids}\in[0,5]$ discretized into 20 CTR values and 20 bids (equally spaced). We also set auction density $n=30$, utility $v=5$, step size $\eta=0.7$, and the number of auctions at each time step $\kappa=10$. Figure \ref{fig:baseline} demonstrates the population CTR distribution $\mu$ set as truncated $N(0.2,0.09)$ as well as the population bidding distribution $\alpha_t$ both at $t=0$ set as truncated $N(1.5,1.44)$ and its convergent distribution at $T=1000$ averaged over 1000 independent runs. Results are consistent with different values of $\eta$. 
\begin{figure}[!ht]
     \centering
     \begin{subfigure}{0.42\textwidth}
        \includegraphics[width=\textwidth]{./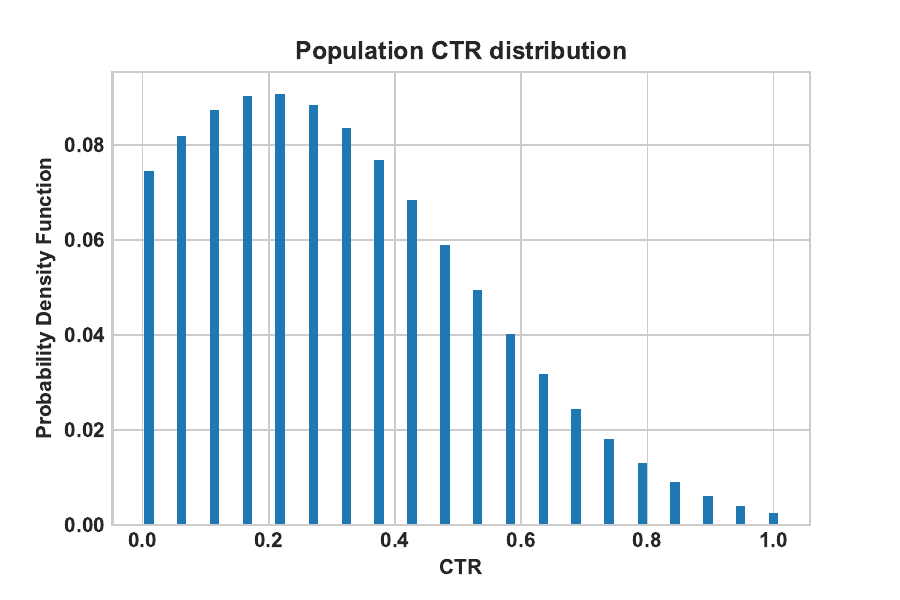}
         \caption{Population CTR Distribution}
         \label{fig:mfe_aval_ctr}
     \end{subfigure}
     \begin{subfigure}{0.42\textwidth}
        \includegraphics[width=\textwidth]{./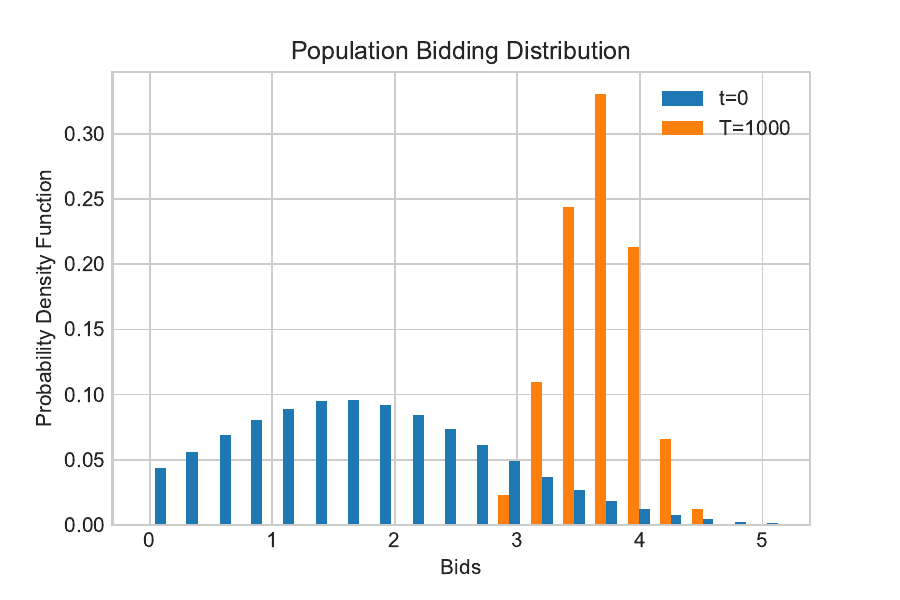}
         \caption{Population Bidding Distribution}
         \label{fig:population_baseline}
     \end{subfigure}
    \caption{Agnostic bidding heuristic. The left plot shows the population CTR distribution ($\mu$) and the right plot presents the population bidding distribution ($\alpha_t$) both at $t=0$ (set as truncated $N(1.5,1.44)$) and its convergent distribution at $T=1000$ averaged over 1000 independent runs.
    }\label{fig:baseline}
\end{figure}

\paragraph{Mean-field equilibrium dependency evaluation.}
We applied existing algorithms including Mean-Field Occupation Measure Optimization (MF-OMO) \citep{guo2022mf}, Online Mirror Descent (OMD) \citep{perolat21}, and Fictitious Play (FP) \citep{perrin2020fictitious} to estimate equilibrium bidding policies. We then evaluated all stakeholders' interests and the overall welfare and compared them against our agnostic bidding heuristic described in \ref{alg:baseline}. 

Figure \ref{fig:baseline_MFE_dists} illustrates the distribution of convergent bidding under our bidding heuristic and Mean Field Equilibrium (MFE) bidding policies, along with the population CTR distribution, which is modeled as a truncated normal distribution $N(0.2, 0.09)$. As observed, the MFE policies, factoring in the competition among advertisers, recommend a bid of \$5. This contrasts with the heuristic policy, which neglects to consider the equilibrium in the game and subsequently proposes less competitive and lower bids. The measure of exploitability, indicative of the proximity to the mean field equilibrium, is recorded at 0 for MFE policies, while it stands at 0.1356 for the heuristic policy.

\begin{figure}
    \centering
    \includegraphics[width=1.05\textwidth]{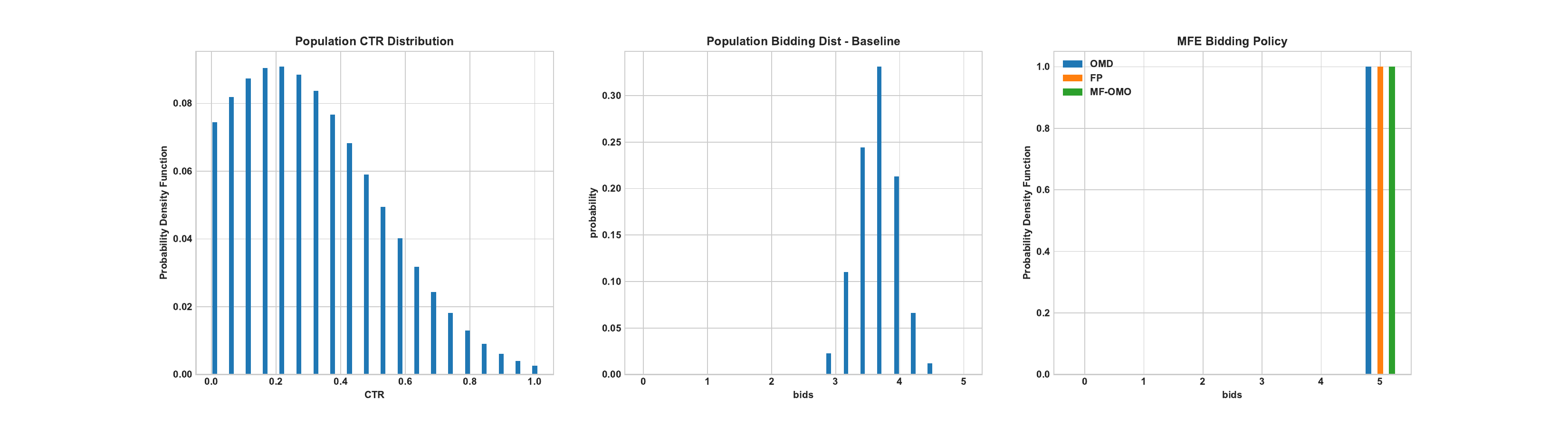}
    \caption{Population CTR distribution and bidding policies under the agnostic bidding policy and the mean field equilibrium. The left plot illustrates the distribution of the population CTR. The middle plot presents the distribution of convergent bidding under the agnostic bidding heuristic. The right plot demonstrates the mean field equilibrium bidding policy according to three mean field game algorithms (MF-OMO, OMD, and FP), each recommending a bid of \$5.}
    \label{fig:baseline_MFE_dists}
\end{figure}

Figure \ref{fig:baseline_omd} presents the welfare metrics of each involved party including the ad publisher's revenue (Cost Per Click or CPC), the advertiser's Return on Advertising Spend (RoAS), and the shopper's Click-Through Rate (CTR). The comparison is made between the heuristic and mean field equilibrium bidding policies derived from OMD, FP, and MF-OMO algorithms. In each scenario, the utility value is held constant at $v=5$. We observe that incorporating mean field equilibrium dependencies leads to more competitive bidding policies, hence a potential benefit to the ad publisher and shoppers and a potential disadvantage to the advertisers. 
\begin{figure}[!ht]
     \centering
     \begin{subfigure}{0.32\textwidth}
         \includegraphics[width=\textwidth]{./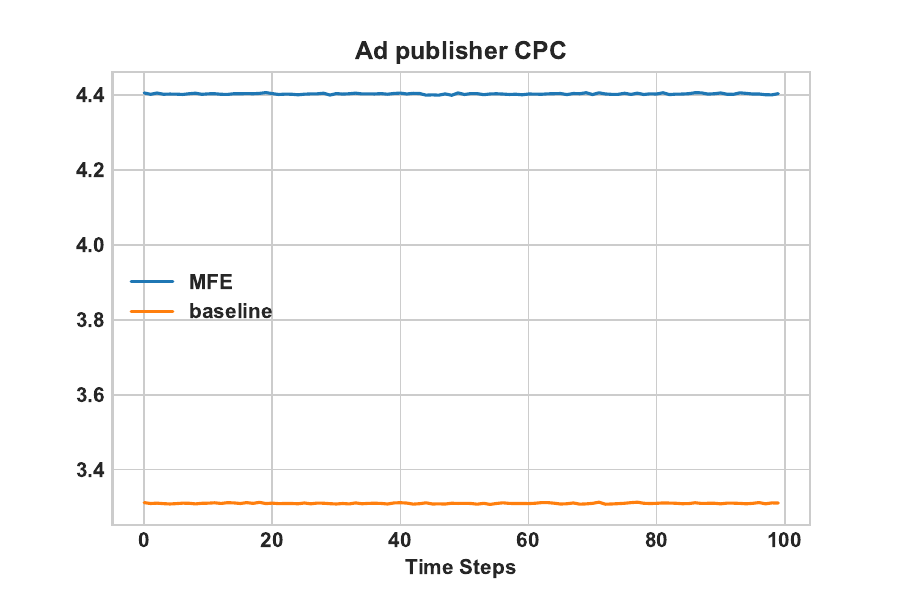}
         \caption{CPC}
         \label{fig:cpc_baseline_omd}
     \end{subfigure}
     \begin{subfigure}{0.32\textwidth}
         \includegraphics[width=\textwidth]{./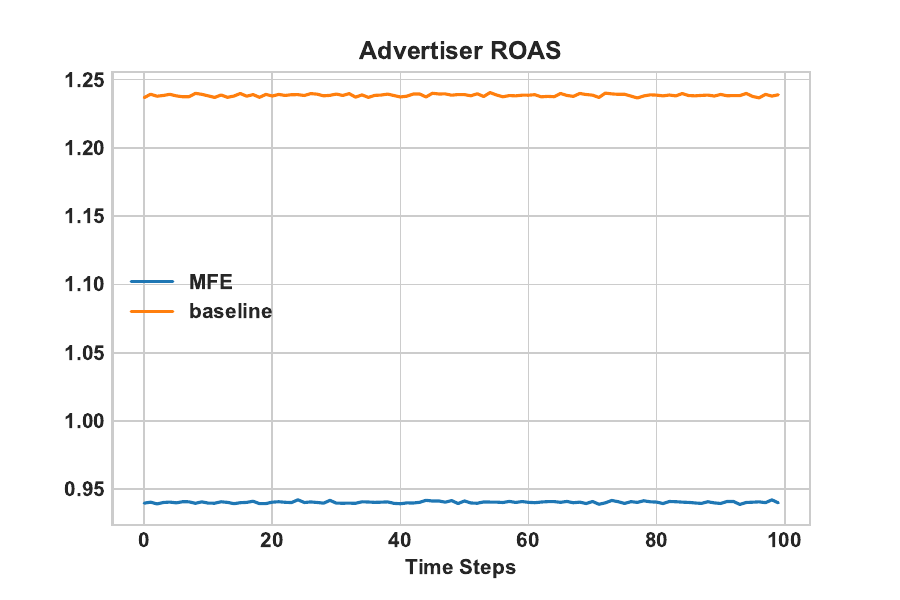}
         \caption{ROAS}
         \label{fig:RoAS_baseline_omd}
     \end{subfigure}
     \begin{subfigure}{0.32\textwidth}
         \includegraphics[width=\textwidth]{./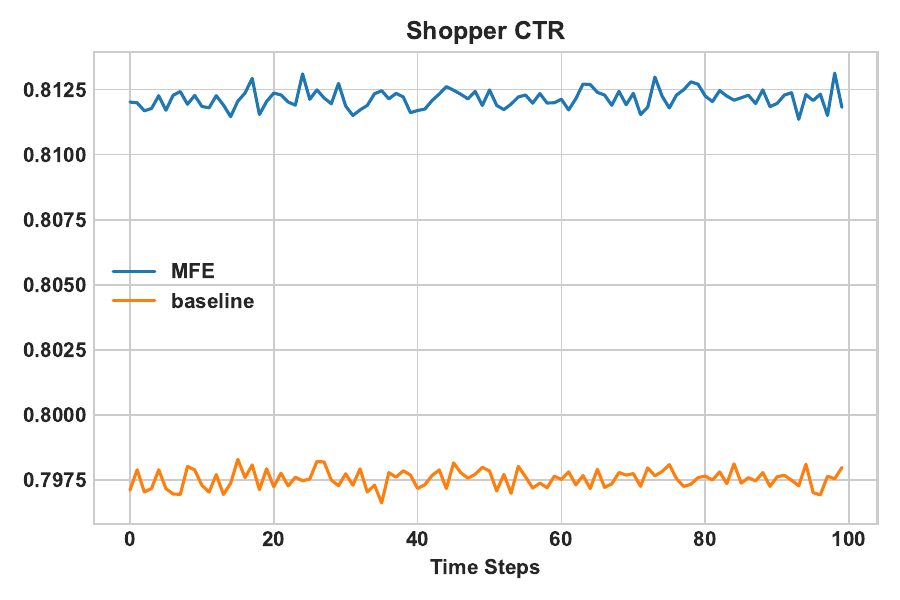}
         \caption{CTR}
         \label{fig:ctr_baseline_omd}
     \end{subfigure}
     \caption{Comparison of welfare metrics for all stakeholders under the agnostic bidding heuristic and the mean-field equilibrium bidding policy of bidding \$5: Cost Per Click (CPC) for the ad publisher, Return on Ad Spend (RoAS) for the advertisers, and the Click-Through Rate (CTR) for shoppers. Results are averaged over 100000 independent runs, each for 100 time steps.}
     \label{fig:baseline_omd}
\end{figure}

\section{From naive scalarization to MESOB-OMO}\label{mfomo_to_mesobomo}
In this section, we provide a step-by-step derivation of  MESOB-OMO starting from the naive scalarization \eqref{scalarization}. We begin by deriving occupation measure optimization (OMO) based representations of the exploitability and the social welfare metrics, and by utilizing tools from \cite{guo2022mf}.
\subsection{OMO-based representation of exploitability}\label{review_mfomo} 

The approach of Mean-Field Occupation Measure Optimization (MF-OMO) in \citet{guo2022mf} combines  the best-response and consistency conditions in a single objective, using  the concept of occupation measures. This leads to  an efficient computing method for the NEs of MFGs. More precisely, finding NEs of an MFG with dynamics $P_t(\cdot|\cdot,\cdot,L_t)$ and rewards $r_t(\cdot,\cdot,L_t)$ ($t=0,\dots,T$) (\cf also \S\ref{sec:formulation}) is shown to be  equivalent to solving the following optimization problem:
\begin{equation}\label{mfvmo-full}
\begin{array}{ll}
\text{minimize}_{y,z,d} & \|A_dd-b\|_2^2+\|A_d^\top y+z-c_d\|_2^2+z^\top d\\
\text{subject to} & d\geq 0, \quad {\color{black}\mathbf{1}^\top \vec{d}_t=1, \,t=0,\dots,T,} \\
& \mathbf{1}^\top z\leq SA(T^2+T+2)r_{\max}, \quad z\geq 0, \tag{MF-OMO}\\
& \|y\|_2\leq \frac{S(T+1)(T+2)}{2}r_{\max}.
\end{array}
\end{equation}
The policy $\pi\in\Pi(d)$ for execution is then retrieved, with $\Pi$  defined in \S\ref{mesob-omo-main}. Then $\pi$ is an NE of the MFG if and only if the objective value of \eqref{mfvmo-full} is $0$. Here $d\in\mathbb{R}^{SA(T+1)}$ is called the occupation measure variable, which is adopted to approximate the population distribution flow with $d_{t,s,a}\approx L_t(s,a)$. As in the main text, $\vec{d}_t\in\mathbb{R}^{SA}$ is the $t$-th slicing of the tensor in its time dimension (and accordingly, $\vec{d}_t\approx L_t$). In addition, $A_d\in\mathbb{R}^{S(T+1)\times SA(T+1)}$ is a matrix-valued function of $d$ that is defined by flattened vectors of dynamics $P_t$, $c_d\in\mathbb{R}^{SA(T+1)}$ is a vector-valued function of $d$ that is defined by the flattened vectors of the rewards $r_t$, and $b\in\mathbb{R}^{S(T+1)}$ is a constant vector that depends on the initial population state distribution $\mu_0$. See the  precise formulas of these quantities in \cite{guo2022mf}. In the context of MESOB-OMO in the main text, we have $\|A_dd-b\|_2^2=g^{\text{CS}}(d)$ and $\|A_d^\top y+z-c_d\|_2^2=h^{\text{BR}}(y,z,d)$. In other words, $\|A_dd-b\|_2$ and $\|A_d^\top y+z-c_d\|_2$ correspond to consistency and best-response conditions, respectively, while the complementarity term $z^\top d$ serves as a bridge of the two conditions. 

The most notable distinction between MF-OMO and our framework is that the former concerns solely the competition, while the latter balances between competition and cooperation. Consequently, the exploitability being close to $0$ does not necessarily imply  (approximate) Pareto optimal solutions. To resolve this issue, we introduce the following lemma, which suggests that if the consistency and best-response conditions are approximately satisfied for an $y, z, L$ solution of MF-OMO, then the complementarity term ($z^\top L$) can serve as an approximate upper bound of the exploitability for the induced policy from $L$. Moreover, there exist modifications to the solutions $y$ and $z$ that approximate the consistency and best-response conditions, and yield a complementarity term which also approximate the exploitability. 

\begin{lemma}\label{zL_vs_expl} Suppose that for any $t=0,\dots,T$, $P_t(\cdot|\cdot,\cdot,L_t)$ and $r_t(\cdot,\cdot,L_t)$ are Lipschitz continuous in $L_t$.
Let $y,z,d$ be such that $d\geq 0$, $\|y\|_2 \leq y_{\max}$, $\|A_dd-b\|_2\leq \epsilon_1$, $\|A_d^\top y+z-c_d\|_2\leq \epsilon_2$, $z\geq0$ and $y\geq0$. Then for any $\pi\in\Pi(d)$, we have $z^\top d\geq \text{\rm Expl}(\pi)+O(\epsilon_1+\epsilon_2)$. Moreover, there exist $y^\pi$ and $z^\pi$ such that $\|A_d^\top y^\pi+z^\pi-c_d\|_2= O(\epsilon_1+\epsilon_2)$ and $d^\top z^\pi=\text{\rm Expl}(\pi)+O(\epsilon_1+\epsilon_2)$.
\end{lemma}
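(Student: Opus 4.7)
}
The overall strategy is to interpret $(y,z)$ as an approximate value function / slack-variable pair for the MDP frozen at the population flow $d$, and then to show that both $z^{\top}d$ and $\text{Expl}(\pi)$ can be expressed by the same Bellman-style telescoping identity, differing only by terms controllable by $\epsilon_1,\epsilon_2$ and Lipschitz constants. First I would establish a ``consistency perturbation'' step: from $\|A_d d - b\|_2\le \epsilon_1$ together with $\pi\in\Pi(d)$, the recursion defining $\Gamma$ and the Lipschitz continuity of $P_t$ in $L_t$ give, by induction on $t$, a bound $\|\vec d_t - \Gamma(\pi)_t\|_1 = O(\epsilon_1)$ with constants depending only on $T$ and the Lipschitz constants. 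This lets me freely swap evaluations at $d$ and at $\Gamma(\pi)$, paying only $O(\epsilon_1)$ per swap when multiplied by bounded quantities such as $y$ (via $\|y\|_2\le y_{\max}$) or rewards (bounded by $r_{\max}$).

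Next I would unpack the best-response constraint $\|A_d^{\top}y + z - c_d\|_2\le \epsilon_2$, which, from the explicit form of $h^{\text{BR}}$, gives three families of approximate equalities: a terminal one $y_{T-1,s}\approx r_T(s,a,\vec d_T)+z_{T,s,a}$, an intermediate Bellman-type one $y_{t,s}\approx r_{t+1}(s,a,\vec d_{t+1})+\sum_{s'}P_{t+1}(s'|s,a,\vec d_{t+1})y_{t+1,s'}+z_{t+1,s,a}$, and an initial one $-y_{T,s}\approx r_0(s,a,\vec d_0)+\sum_{s'}P_0(s'|s,a,\vec d_0)y_{0,s'}+z_{0,s,a}$. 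Because $z\ge 0$, dropping the $z$-term yields inequalities showing that $\{y_{t-1,s}\}_{t=1}^{T}$ are, up to additive $O(\epsilon_2)$ errors that accumulate linearly in $T$, upper bounds on the optimal value functions $\{V_t^{*}(\cdot;d)\}_{t=1}^{T}$ of the MDP frozen at $d$, and in particular $\sum_s\mu_0(s)(-y_{T,s})\ge \sum_s\mu_0(s)V_0^{*}(s;d) - O(\epsilon_2)$.

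For the first claim, I would then compute $z^{\top}d$ by substituting the above approximate Bellman relations for each $z_{t,s,a}$. The internal $y$-terms $\sum_{s,a}d_{t,s,a}\sum_{s'}P_t(s'|s,a,\vec d_t)y_{\cdot,s'}$ collapse against $\sum_{s,a}d_{t+1,s',a}\,y_{\cdot,s'}$ by the approximate flow-conservation identities from $g^{\text{CS}}$, leaving only the boundary terms, so that
\[
z^{\top}d \;=\; \sum_s (-y_{T,s})\,D_0(s) \;-\; \sum_{t,s,a} d_{t,s,a}\,r_t(s,a,\vec d_t) \;+\; R,
\]
where $D_0(s)=\sum_a d_{0,s,a}$ and $|R|=O(\epsilon_1 y_{\max}+\epsilon_2\sqrt{T+1})$ by Cauchy--Schwarz combined with $\|d\|_2\le\sqrt{T+1}$. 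Using $D_0\approx\mu_0$ and the upper-bound property of $-y_{T,s}$, followed by the Lipschitz transfer from $d$ to $\Gamma(\pi)$, the right-hand side becomes $\sum_s\mu_0(s)V_0^{*}(s;\Gamma(\pi)) - J(\pi,\Gamma(\pi)) + O(\epsilon_1+\epsilon_2) = \text{Expl}(\pi) + O(\epsilon_1+\epsilon_2)$.

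For the second claim, I would construct $(y^{\pi},z^{\pi})$ directly from the optimal value function $V^{*}(\cdot;\Gamma(\pi))$ of the MDP frozen at $L^{*}:=\Gamma(\pi)$: set $y^{\pi}$ equal to $V^{*}$ with the same index shift the OMO formulation uses (so that $-y^{\pi}_{T,s}=V_0^{*}(s;L^{*})$), and set $z^{\pi}_{t,s,a}$ equal to the corresponding non-negative advantage gap, whence $\|A_{L^{*}}^{\top}y^{\pi}+z^{\pi}-c_{L^{*}}\|_2=0$ exactly. The performance-difference lemma applied at $L^{*}$ then gives $(L^{*})^{\top}z^{\pi}=\text{Expl}(\pi)$. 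Swapping $L^{*}$ for $d$ in both the best-response residual and the inner product, using $\|\vec d_t-L^{*}_t\|_1=O(\epsilon_1)$, Lipschitz continuity of $P_t,r_t$, and the a priori bound $\|y^{\pi}\|_\infty\le (T+1)r_{\max}$, produces $\|A_d^{\top}y^{\pi}+z^{\pi}-c_d\|_2=O(\epsilon_1+\epsilon_2)$ and $d^{\top}z^{\pi}=\text{Expl}(\pi)+O(\epsilon_1+\epsilon_2)$. The main obstacle, as I see it, is the careful bookkeeping of the unusual indexing that makes $-y_{T,s}$ play the role of the time-$0$ value: verifying that the telescoping over the intermediate times, the terminal term, and the initial term combine cleanly, with every residual genuinely dominated by $y_{\max}\epsilon_1+\sqrt{T+1}\,\epsilon_2$ times an explicit constant in $T$, $r_{\max}$, and the Lipschitz constants, rather than growing exponentially.
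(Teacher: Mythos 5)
Your plan is correct and follows essentially the same route as the paper's proof: the flow-perturbation bound $\|\vec d_t-\Gamma(\pi)_t\|_1=O(\epsilon_1)$, the algebraic identity $z^\top d = c_d^\top d - b^\top y + O(\epsilon_1+\epsilon_2)$ (your telescoping collapse is exactly this identity in coordinates), the fact that $-y_T$ approximately dominates the optimal values of the MDP frozen at the mean field, and the exact construction of $(y^\pi,z^\pi)$ at $\Gamma(\pi)$ with complementarity equal to $\text{Expl}(\pi)$ followed by an $O(\epsilon_1+\epsilon_2)$ perturbation back to $d$. The only difference is that you re-derive the value-function domination and the $(y^\pi,z^\pi)$ construction from the Bellman structure, whereas the paper imports them from \cite{guo2022mf}.
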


\begin{proof}
The proof can be found in Appendix \ref{appendix:lemma1}. 
\end{proof}

\subsection{OMO-based representation of social welfare} 
Here  we show how the (general utility) MFC problem of maximizing $V(\Gamma(\pi))$ over policies $\pi\in\mathcal{M}$ can be reformulated as a problem in the occupation measure variable $d$. This result is essential for the two-level game: connecting the  social planner's control problem with the Nash equilibrium of the  MFG, as the latter is an optimization problem in $d$ (and some other auxiliary variables $y,z$) but not $\pi$.  

\begin{lemma}\label{MFCinL}
For any $\epsilon\geq 0$, 
if $\pi\in\mathcal{M}$ is an $\epsilon$-suboptimal solution to $\text{maximize}_{\pi\in\mathcal{M}}\,V(\Gamma(\pi))$, then $d=\Gamma(\pi)$ is an $\epsilon$-suboptimal solution to $\text{maximize}_{d}\, V(d)$ subject to $A_dd=b,\, d\geq 0$. Conversely, if $d$ is an $\epsilon$-suboptimal solution to $\text{maximize}_{d}\, V(d)$ subject to $A_dd=b,\, d\geq 0$, then $d\in\Delta(\mathcal{S}\times\mathcal{A})$ and for any $\pi\in\Pi(d)$, $\pi$ is an $\epsilon$-suboptimal solution to $\text{maximize}_{\pi\in\mathcal{M}}\, V(\Gamma(\pi))$.
\end{lemma}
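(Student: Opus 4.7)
The plan is to reduce the lemma to the assertion that the feasible set $\{d : A_d d = b,\, d \geq 0\}$ is exactly the range $\{\Gamma(\pi) : \pi \in \mathcal{M}\}$ of the flow map $\Gamma$ defined in \eqref{eq:gamma}, and that the induced policy retrieval $\Pi$ inverts this map. Once this identification is made, both suboptimality transfers follow immediately because the two objectives $V(\Gamma(\pi))$ and $V(d)$ agree on corresponding pairs $(\pi,d)$.

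First I would unpack the constraint $A_d d = b$ from \eqref{mfvmo-full} and observe that it is exactly the system $\sum_a d_{0,s,a} = \mu_0(s)$ for all $s\in\mathcal{S}$ and $\sum_a d_{t+1,s',a} = \sum_{s,a} d_{t,s,a} P_t(s'|s,a,\vec d_t)$ for all $s'\in\mathcal{S}$ and $t=0,\dots,T-1$; i.e., $g^{\text{CS}}(d)=0$. The forward inclusion is then direct: if $d = \Gamma(\pi)$, the recursive definition of $\Gamma$ gives $A_d d = b$ and $d \geq 0$. For the reverse inclusion, given any feasible $d$, I would first show by induction on $t$ that $\mathbf{1}^\top \vec d_t = 1$ (using $\sum_{s'}P_t(s'|s,a,\vec d_t)=1$), so that $\vec d_t\in\Delta(\mathcal{S}\times\mathcal{A})$. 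Then for any $\pi\in\Pi(d)$, a second induction verifies $\Gamma(\pi)_t = \vec d_t$: at $t=0$ we have $\Gamma(\pi)_0(s,a)=\mu_0(s)\pi_0(a|s)=(\sum_{a'}d_{0,s,a'})\pi_0(a|s)=d_{0,s,a}$ by the definition of $\pi_0$, and the inductive step pushes $\vec d_t$ through one step of $\Gamma$ using the consistency equations and the analogous identity $\pi_{t+1}(a|s)\sum_{a'}d_{t+1,s,a'}=d_{t+1,s,a}$.

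With the bijection in hand, both halves of the lemma are short. If $\pi$ is $\epsilon$-suboptimal for $\max_\pi V(\Gamma(\pi))$, then $d=\Gamma(\pi)$ is feasible in the $d$-problem with $V(d)=V(\Gamma(\pi))$, and since every feasible $d'$ equals $\Gamma(\pi')$ for some $\pi'\in\mathcal{M}$, we get $\max_{d \text{ feas.}} V(d) = \max_\pi V(\Gamma(\pi))$, so the $\epsilon$-gap transfers. Conversely, if $d$ is $\epsilon$-suboptimal, then $d\in\Delta(\mathcal{S}\times\mathcal{A})$ componentwise as shown, and for any $\pi\in\Pi(d)$ we have $\Gamma(\pi)=d$, hence $V(\Gamma(\pi))=V(d)$, and the same identification of maxima yields the claim.

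The only mildly delicate point, and what I expect to be the main (though still minor) obstacle, is the treatment of ``dead'' states at time $t$ where $\sum_{a'} d_{t,s,a'}=0$; on these, $\pi_t(\cdot|s)$ is chosen arbitrarily in the definition of $\Pi$. I would verify that such states contribute zero mass to the update in both $\Gamma$ and $V$ (since $V(d)$ depends on $d_{t,s,a}$ only as a coefficient, and $\Gamma(\pi)_{t+1}$ multiplies $\pi_t(a|s)$ by $\sum_{a'}d_{t,s,a'}=0$), so the arbitrary choice does not affect $\Gamma(\pi)=d$ nor the value of the objective. No regularity assumption on $V$, $F$, or the dynamics is needed beyond what makes $\Gamma$ and $V(\Gamma(\pi))$ well-defined, so the argument is essentially a clean change of variables between the policy and occupation-measure representations.
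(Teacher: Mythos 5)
Your proposal is correct and follows essentially the same route as the paper: identify the feasible set $\{d: A_dd=b,\ d\ge 0\}$ with the range of $\Gamma$, use that $\Pi(d)$ recovers a policy with $\Gamma(\pi)=d$ (including the zero-mass states), and transfer the $\epsilon$-gap since the objectives and optimal values coincide under this correspondence. The only difference is that you spell out by induction the facts ($\mathbf{1}^\top\vec d_t=1$ and $\Gamma(\pi)_t=\vec d_t$) that the paper simply attributes to the definitions of $\Gamma$, $\Pi$, $A_L$, and $b$.
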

\begin{proof}
Proof is provided in Appendix \ref{appendix:lemma2}.
\end{proof}

\subsection{The MESOB-OMO reformulation of the naive scalarization}

We are now ready to derive the MESOB-OMO formulation. 
In the following, we first provide an intermediary formulation introduced in lemma \eqref{lemma:mesobomo-intermidiary}, which serves as a basis for the final penalized version presented in \eqref{mesobomo}. 

\begin{lemma}\label{lemma:mesobomo-intermidiary}
Let $y,z,d$ be an optimal solution to the following optimization problem:
\begin{equation}
\label{mesobomo-constr}
    \begin{array}{ll}
        \text{minimize}_{y,z,d} &
         -\lambda_1 V(d)+\lambda_2 z^\top d,\\
        \text{subject to} & A_dd=b, \ A_d^\top y+z=c_d,\ d\geq 0,\ z\geq 0, \\
        & {\bf 1}^\top z\leq SA(T^2+T+2)r_{\max}, \\  & \|y\|_2\leq S(T+1)(T+2)r_{\max}/2.
    \end{array}
\end{equation}
Then for any $\pi\in\Pi(d)$, $\pi$ is a Pareto-efficient solution of MESOB. Note that under the OMO notation styles, here $V(d)$ can be more explicitly written out as 
$V(d)=F(d^\top c_L^{(1)},\dots,d^\top c_L^{(K)})$, and $c_L^{(k)} = [-r_0^{(k)}(\cdot,\cdot,L_0),\dots,-r_T^{(k)}(\cdot,\cdot,L_t)]$,  where $r_t^{(k)}(\cdot,\cdot,L_t)\in\mathbb{R}^{SA}$ is seen as a flattened vector of the social metric contribution $k$. 
\end{lemma}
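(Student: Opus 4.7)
The plan is to reduce the constrained problem \eqref{mesobomo-constr} to a positive-weight scalarization of MESOB expressed directly in terms of $\pi$, and then invoke the standard Pareto scalarization argument.

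First, I would show that the value of \eqref{mesobomo-constr} equals $\inf_{\pi\in\mathcal{M}}\bigl[-\lambda_1 V(\Gamma(\pi))+\lambda_2\,\text{Expl}(\pi)\bigr]$, attained at the same $\pi$. For the $\geq$ direction: given any feasible $(y,z,d)$ and any $\pi\in\Pi(d)$, Lemma \ref{MFCinL} (with $\epsilon=0$) yields $d=\Gamma(\pi)$ and $V(d)=V(\Gamma(\pi))$, while Lemma \ref{zL_vs_expl} in the exact case $\epsilon_1=\epsilon_2=0$ yields $z^\top d\geq\text{Expl}(\pi)$. Hence $-\lambda_1 V(d)+\lambda_2 z^\top d\geq -\lambda_1 V(\Gamma(\pi))+\lambda_2\,\text{Expl}(\pi)$. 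For the $\leq$ direction: given any $\pi\in\mathcal{M}$, set $d=\Gamma(\pi)$ (which satisfies $A_dd=b,\ d\geq 0$ componentwise) and choose the witness pair $(y^\pi,z^\pi)$ supplied by the second part of Lemma \ref{zL_vs_expl}, which satisfies $A_d^\top y^\pi+z^\pi=c_d$, $z^\pi\geq 0$, and $z^{\pi\top}d=\text{Expl}(\pi)$. This exhibits a feasible triple whose objective equals $-\lambda_1 V(\Gamma(\pi))+\lambda_2\,\text{Expl}(\pi)$.

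Given this reduction, the conclusion is the classical positive-weight scalarization argument. Let $(y,z,d)$ be an optimal solution of \eqref{mesobomo-constr} and fix any $\pi\in\Pi(d)$. If $\pi$ were not Pareto-efficient for MESOB, some $\pi'\in\mathcal{M}$ would satisfy $V(\Gamma(\pi'))\geq V(\Gamma(\pi))$ and $\text{Expl}(\pi')\leq\text{Expl}(\pi)$ with at least one inequality strict; since $\lambda_1,\lambda_2>0$, this gives $-\lambda_1 V(\Gamma(\pi'))+\lambda_2\,\text{Expl}(\pi')<-\lambda_1 V(\Gamma(\pi))+\lambda_2\,\text{Expl}(\pi)$. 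Feeding $(y^{\pi'},z^{\pi'},\Gamma(\pi'))$ back into \eqref{mesobomo-constr} produces a strictly smaller objective value, contradicting optimality of $(y,z,d)$.

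The main technical obstacle is verifying that the explicit norm bounds $\|y\|_2\leq S(T+1)(T+2)r_{\max}/2$ and $\mathbf{1}^\top z\leq SA(T^2+T+2)r_{\max}$ never cut off the witnesses $(y^{\pi'},z^{\pi'})$ for any $\pi'\in\mathcal{M}$. This is a dual-variable calibration inherited from the MF-OMO construction of \cite{guo2022mf}: one can take $y^{\pi'}$ to be (a shift of) the state-value function of $\pi'$ in the MDP induced by the population flow $\Gamma(\pi')$, and $z^{\pi'}$ to be the vector of per-$(s,a)$ best-response gaps, both of which admit a priori bounds polynomial in $T$ and linear in $r_{\max}$ matching the prescribed constants. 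Once this calibration is invoked, the remaining steps are bookkeeping.
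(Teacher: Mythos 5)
Your proposal is correct and follows essentially the route the paper intends: it fills in the omitted proof exactly as the paper's hint suggests, replacing $V(\Gamma(\pi))$ and $\text{Expl}(\pi)$ by their occupation-measure counterparts via the exact ($\epsilon_1=\epsilon_2=0$) cases of Lemmas \ref{zL_vs_expl} and \ref{MFCinL}, using the MF-OMO witness $(y^{\pi'},z^{\pi'},\Gamma(\pi'))$ from \cite{guo2022mf} for feasibility, and concluding with the standard positive-weight scalarization contradiction—the same chain of inequalities the paper itself uses in the proof of Theorem \ref{mesobomo-vs-pareto}. No substantive gap beyond deferring the norm-bound calibration of the witnesses to \cite{guo2022mf}, which is precisely what the paper does as well.
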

The key is to replace $V(\Gamma(\pi))$ in the naive scalarization \eqref{scalarization} with $V(d)+{\bf 1}_{A_d=b,d\geq 0}$ (\cf Lemma \ref{MFCinL}), while replacing $\text{Expl}(\pi)$ in \eqref{scalarization} with $z^\top L+{\bf 1}_{A_dd=b,A_d^\top y+z=c_d,(y,z,d)\in\Theta}$, where $\Theta$ denotes the constraint set of \eqref{mfvmo-full} (\cf Lemma \ref{zL_vs_expl}). The proof is rather straightforward and hence omitted. 

The constraints $A_dd=b$ and $A_d^\top y+z=c_d$ are in general non-convex. To address such issues, adding appropriate penalization terms  into the objective leads to the following MESOB-OMO formulation, which inherits the smoothness of $P_t$ and $r_t$ in the mean-field terms with essentially trivial or no constraints: 
\begin{equation}\label{mesobomo_appendix}
    \begin{array}{ll}
        \text{minimize}_{y,z,d} &
        f^{\texttt{MESOB-OMO}}(y,z,d):= -\lambda_1 V(d)+\lambda_2z^\top d\\
    &\hspace{3.12cm}+\rho_1\|A_dd-b\|_2^2+\rho_2 \|A_d^\top y+z-c_d\|_2^2,\\
        \text{subject to} & d\geq 0,\quad {\bf 1}^\top\vec{d}_t=1,\quad t=0,\dots,T,\\
        &z\geq 0, \quad {\bf 1}^\top z\leq SA(T^2+T+2)r_{\max}, \\  & \|y\|_2\leq S(T+1)(T+2)r_{\max}/2.
    \end{array}
\end{equation}
Here $\rho_1,\ \rho_2>0$ correspond to the penalization parameters for the constrains $A_dd=b$ and $A_d^\top y+z=c_d$ of \eqref{mesobomo-constr} respectively. 

The following lemma, which characterizes the connection between the penalty coefficients $\rho_1$ and $\rho_2$ and the constraint violations $\|A_dd-b\|_2$ and $A_d^\top y+z-c_d\|_2$, serves as a stepping stone for the proofs of Theorems \ref{mesobomo-vs-pareto} and \ref{mesob-omo-asymptotic} below. It is a non-asymptotic extension of the standard quadratic penalty method convergence proof in the optimization literature \citep[e.g.,][]{wright1999numerical} that follows the same steps there but making use of the boundedness of all the involved variables and functions. 
\begin{lemma}\label{mesob-omo-penalty-appendix}
Suppose that the same assumptions in Theorem \ref{mesobomo-vs-pareto} on $F,\,P_t,\,r_t$ and $r_t^{(k)}$ hold. Then $\exists C_1,C_2>0$, such that for any feasible solution $(y,z,d)$ of \eqref{mesobomo} 
with 
$f^{\texttt{MESOB-OMO}}(y,z,d;\lambda_1,\lambda_2,\rho_1,\rho_2)-f^\star(\lambda_1,\lambda_2,\rho_1,\rho_2)\leq D$, 
we have $\|A_dd-b\|_2^2\leq (C_1\lambda_1+C_2\lambda_2+D)/\rho_1$ and $\|A_d^\top y+z-c_d\|_2^2\leq (\lambda_1C_1+\lambda_2C_2+D)/\rho_2$. Moreover, we also have $-\lambda_1 V_1(d)+\lambda_2 z^\top d\leq f^\star(\lambda_1,\lambda_2)+D$, where $f^\star(\lambda_1,\lambda_2)$ is the optimal objective value of the ``intermediate'' (or more constrained) version of MESOB-OMO \eqref{mesobomo-constr}.
\end{lemma}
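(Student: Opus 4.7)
The plan is to mimic the standard quadratic penalty argument, but to keep all the constants explicit by exploiting the boundedness of the MESOB-OMO feasible set. First I would use the more constrained formulation \eqref{mesobomo-constr} as a ``certificate'': any optimal triple $(y^*,z^*,d^*)$ for \eqref{mesobomo-constr} is also feasible for \eqref{mesobomo} and makes both quadratic penalty terms $\|A_{d^*}d^*-b\|_2^2$ and $\|A_{d^*}^\top y^*+z^*-c_{d^*}\|_2^2$ vanish, hence $f^{\texttt{MESOB-OMO}}(y^*,z^*,d^*;\lambda_1,\lambda_2,\rho_1,\rho_2)=-\lambda_1 V(d^*)+\lambda_2 (z^*)^\top d^*=f^\star(\lambda_1,\lambda_2)$. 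Consequently $f^\star(\lambda_1,\lambda_2,\rho_1,\rho_2)\le f^\star(\lambda_1,\lambda_2)$, which is the key comparison step.

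Next I would produce $D$-independent upper bounds $M_1,M_2>0$ on the unpenalized part of the objective over the MESOB-OMO feasible set. Since $\vec d_t\in\Delta(\mathcal S\times\mathcal A)$, the Lipschitz continuity of each $r_t^{(k)}(\cdot,\cdot,L_t)$ in $L_t$ implies boundedness of $|V^{(k)}(d)|$ by a constant depending only on the problem data, and then Lipschitz continuity of $F$ yields $|V(d)|\le M_1$. The complementarity term is bounded by $0\le z^\top d\le \|z\|_1\max_{t,s,a}d_{t,s,a}\le SA(T^2+T+2)r_{\max}=:M_2$. Hence for any feasible $(y,z,d)$ the unpenalized part satisfies
\[
-\lambda_1 V(d)+\lambda_2 z^\top d\;\ge\;-\lambda_1 M_1,
\qquad
-\lambda_1 V(d^*)+\lambda_2 (z^*)^\top d^*\;\le\;\lambda_1 M_1+\lambda_2 M_2,
\]
so in particular $f^\star(\lambda_1,\lambda_2)\le \lambda_1 M_1+\lambda_2 M_2$.

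Combining the previous two steps, for any $(y,z,d)$ with $f^{\texttt{MESOB-OMO}}(y,z,d;\lambda_1,\lambda_2,\rho_1,\rho_2)-f^\star(\lambda_1,\lambda_2,\rho_1,\rho_2)\le D$ I get
\[
\rho_1\|A_dd-b\|_2^2+\rho_2\|A_d^\top y+z-c_d\|_2^2\;\le\; f^\star(\lambda_1,\lambda_2)+D+\lambda_1 M_1\;\le\; 2\lambda_1 M_1+\lambda_2 M_2+D.
\]
Non-negativity of each penalty term separately then gives the two claimed inequalities with $C_1:=2M_1$ and $C_2:=M_2$. The third inequality is immediate: because both penalty terms are non-negative,
\[
-\lambda_1 V(d)+\lambda_2 z^\top d\;\le\; f^{\texttt{MESOB-OMO}}(y,z,d;\lambda_1,\lambda_2,\rho_1,\rho_2)\;\le\; f^\star(\lambda_1,\lambda_2,\rho_1,\rho_2)+D\;\le\; f^\star(\lambda_1,\lambda_2)+D.
\]

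I expect the main (and really only) technical obstacle to be the clean bookkeeping of the universal constants $M_1,M_2$: showing that they depend solely on $S,A,T,r_{\max}$, the Lipschitz constants of $F$ and the $r_t^{(k)}$'s, and on problem data such as $\mu_0$, but not on $\lambda_1,\lambda_2,\rho_1,\rho_2$ or on the particular triple $(y,z,d)$. Once that is pinned down, the rest is essentially the textbook quadratic penalty estimate, with the crucial ingredient being the existence of a feasible point of \eqref{mesobomo-constr} with zero penalty (which is provided by any Nash equilibrium via Lemma~\ref{zL_vs_expl}).
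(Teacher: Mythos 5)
Your proposal is correct and follows essentially the same route as the paper's proof: compare the $\epsilon$-suboptimal point against a feasible point of the constrained formulation \eqref{mesobomo-constr} (where the penalties vanish), then use boundedness of $V(d)$ and $z^\top d$ over the compact feasible set to absorb everything into constants $C_1,C_2$ depending only on problem data. Your version merely makes the constants slightly more explicit (e.g., $z^\top d\le SA(T^2+T+2)r_{\max}$ via H\"older) and uses an optimizer of \eqref{mesobomo-constr} rather than an arbitrary feasible point, which is justified by compactness and continuity.
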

\begin{proof}
Let $(\hat{y},\hat{z},\hat{d})$ be an arbitrary tuple such that the constraints of the \eqref{mesobomo-constr} hold. Then since the constraints of \eqref{mesobomo} is a subset of the constraints of \eqref{mesobomo-constr}, we have $f^{\texttt{MESOB-OMO}}(y,z,d;\lambda_1,\lambda_2,\rho_1,\rho_2)\geq f^\star(\lambda_1,\lambda_2,\rho_1,\rho_2)$, and hence
\begin{equation}\label{mesob-omo-penalty-proof-key-ineq}
\begin{split}
-&\lambda_1V(d)+\lambda_2z^\top d+\rho_1 \|A_dd-b\|_2^2+\rho_2\|A_d^\top y+z-c_d\|_2^2\\
&\leq -\lambda_1V(\hat{d}) + \lambda_2 \hat{z}^\top\hat{d}+\rho_1\|A_{\hat{d}}\hat{d}-b\|_2^2+\rho_2\|A_{\hat{d}}^\top \hat{y}+\hat{z}-c_{\hat{d}}\|_2^2+D\\
&=-\lambda_1 V_1(\hat{d}) +\lambda_2 \hat{z}^\top\hat{d}+ D,
\end{split}
\end{equation}
which implies that (due to the non-negativity of norms)
\[
-\lambda_1 V(d)+\lambda_2 z^\top d\leq -\lambda_1 V_1(\hat{d}) +\lambda_2 \hat{z}^\top\hat{d}+ D.
\]    
Hence we have $-\lambda_1 V(d)+\lambda_2 z^\top d\leq f^\star(\lambda_1,\lambda_2)+D$ by the fact that $(\hat{y},\hat{z},\hat{d})$ is an arbitrary tuple for which the constraints of \eqref{mesobomo-constr} hold. 

In addition, again by \eqref{mesob-omo-penalty-proof-key-ineq}, we also have 
\[
\rho_1 \|A_dd-b\|_2^2+\rho_2\|A_d^\top y+z-c_d\|_2^2\leq 2\lambda_1 \sup_{\{\vec{d}_{t}\}_{t=0}^T\subseteq\Delta(\mathcal{S}\times\mathcal{A})}V(d)+2\lambda_2 \sup_{z\in Z,\{\vec{d}_{t}\}_{t=0}^T\subseteq\Delta(\mathcal{S}\times\mathcal{A})}z^\top d + D.
\]
Now since $C_1:=2\sup_{\{\vec{d}_{t}\}_{t=0}^T\subseteq\Delta(\mathcal{S}\times\mathcal{A})}V(d)<\infty$ and $C_2:=2\sup_{z\in Z,\{\vec{d}_{t}\}_{t=0}^T\subseteq\Delta(\mathcal{S}\times\mathcal{A})}z^\top d<\infty,$ by the continuity of $F,\,P,\,r_t$ and $r_t^{(k)}$ and the compactness of $\Delta(\mathcal{S}\times\mathcal{A})$ and $Z$, we arrive at the desired claims. 
\end{proof}

\section{Technical proofs}
\label{proofs}
In this section, we provide technical proofs of Lemmas \ref{zL_vs_expl} and \ref{MFCinL}, along with Theorems \ref{mesobomo-vs-pareto} and \ref{mesob-omo-asymptotic}. In the interest of consistency with the notation used in \cite{guo2022mf}, we employ $L$ as the occupation measure variable, instead of $d$.

\subsection{Proof of Lemma \ref{zL_vs_expl}}
\label{appendix:lemma1}

\begin{proof}
Let $\Delta_1:=A_LL-b$ and $\Delta_2:=A_L^\top y+z-c_L$. Then 
\[
z^\top L=(c_L-A_L^\top y+\Delta_2)^\top L=c_L^\top L-y^\top A_LL+\Delta_2^\top L=c_L^\top L-b^\top y+\Delta_2^\top L-\Delta_1^\top y.
\]
Now for any $\pi\in\Pi(L)$, define $L^\pi=\Gamma(\pi)$. Then by \cite[Proof of Theorem 8, Step 3]{guo2022mf},
\begin{equation}\label{L_pi_minus_L}
\sum_{s\in\mathcal{S},a\in\mathcal{A}}|L_t^\pi(s,a)-L_t(s,a)|\leq \dfrac{(C_P+1)^{t+1}-1}{C_P}\sqrt{S}\epsilon_1,
\end{equation}
where $C_P$ is the Lipschitz constant of the transitions. Since $c_{L^\pi}^\top L^\pi=-V_{\mu_0}^\pi$ by definition of $c_L$ and $L^\pi$, we get
\begin{align*}
    |c_L^\top L+V_{\mu_0}^\pi(L^\pi)|&=|c_L^\top L-c_{L^\pi}^\top L^\pi|\leq |c_L^\top L-c_L^\top L^{\pi}|+ |c_L^\top L^\pi -c_{L^\pi }^\top L^{\pi}|\\
    & \leq (r_{\max}+C_r) \sum_{t=0}^T \|L_t - L^\pi_t \|_1 \\
    & \leq (r_{\max} + C_r)\sum_{t=0}^T\dfrac{(C_P+1)^{t+1}-1}{C_P}\sqrt{S}\epsilon_1    =O(\epsilon_1),
\end{align*}
where $r_{\max} = \sup_{s,a,L} r(s,a,L)$. Now by the definition of $b$, $b^\top y=\mu_0^\top y_T$, where $y=[y_0,\dots,y_T]$. Consider $y^\pi, z^\pi$ defined as the $\hat{y}$ and $\hat{z}$ corresponding to $L^\pi$ (as in \cite[Proposition 6]{guo2022mf}). Then $y^\pi_T=-V_0^\star(L^\pi)$. Since $L^\pi$ and $L$ differ by $O(\epsilon_1)$ (see \eqref{L_pi_minus_L}), by the Lipschitz continuity of $r$ and $P$, Cauchy-Schwarz inequality, and the property $\|.\|_2 \leq \|.\|_1$, we see
\begin{align*}
 \|A_{L^\pi}^\top y+z-c_{L^\pi}\|_2 &\leq \|A_{L}^\top y+z-c_{L}\|_2 + \|y\|_2\|A_{L^\pi} - A_L\|_2 + \|c_{L^\pi} - c_L\|_2 \\
 & \leq \epsilon_2 + (\|y\|_2\ C_p+C_r) \sum_{t=0}^T\|L_t^\pi - L_t\|_1 \\
 & \leq  \epsilon_2 + (y_{\max} C_p+C_r) \sum_{t=0}^T\dfrac{(C_P+1)^{t+1}-1}{C_P}\sqrt{S}\epsilon_1 = O(\epsilon_1+\epsilon_2).
\end{align*}

Hence by adapting \cite[Proof of Proposition 6]{guo2022mf} and adding $O(\epsilon_1+\epsilon_2)$ to the RHS of all the inequalities that were used to show $\hat y_T \geq y_T$, one can prove that $y^{\pi}_T\geq  y_T+O(\epsilon_1+\epsilon_2)$, and hence 
\[
-V_{\mu_0}^\star(L^\pi)=\mu_0^\top y^{\pi}_T\geq \mu_0^\top y_T+O(\epsilon_1+\epsilon_2)=b^\top y+O(\epsilon_1+\epsilon_2).
\]

Therefore we obtain the bound on $z^\top L$ as, 
\[
z^\top L = c_L^\top L-b^\top y + O(\epsilon_1+\epsilon_2)\geq -V_{\mu_0}^{\pi}(L^\pi)+V_{\mu_0}^\star(L^\pi)+O(\epsilon_1+\epsilon_2)=\text{Expl}(\pi)+O(\epsilon_1+\epsilon_2).
\]

Finally, since $A_{L^\pi}y^\pi+z^{\pi}-c_{L^\pi}=0$ by definition, we have 
$A_Ly^\pi+z^\pi-c_L=O(\epsilon_1+\epsilon_2)$ 
due to \eqref{L_pi_minus_L}. Also, $L^\top z^\pi = (L^\pi )^\top z^\pi + L^\top z^\pi  - (L^\pi )^\top z^\pi = \text{Expl}(\pi) + (L^\pi - L )^\top z^\pi $, where the second equality follows from \cite[Proof of Theorem 9]{guo2022mf}. By definition $z^\pi$ is bounded. Thus, following \eqref{L_pi_minus_L},  $L^\top z^\pi = \text{Expl}(\pi) + O(\epsilon_1+\epsilon_2)$.
\end{proof}

\subsection{Proof of Lemma \ref{MFCinL}}
\label{appendix:lemma2}

\begin{proof}
Let $\pi^\star$ be an optimal policy that maximizes $V(\Gamma(\pi^\star))$ over $\pi\in\mathcal{M}$, and $L^\star$ be an optimal population flow that maximizes $V(L)$ subject to $A_LL=b,\,L\geq 0$. Note that $\pi^\star$ exists as $V$ and $\Gamma$ are both continuous, and $\mathcal{M}$ is a compact set in $\mathbb{R}^{S\times A\times (T+1)}$. Similarly, $L^\star$ exists as $V$ is continuous and $A_LL=b$ implies that ${\bf 1}^\top L=1$. Hence, with $L\geq0$ and the continuity of $P_t$ in $L_t$,  the constraint set is closed and bounded (and hence compact). 

Now for the first statement, since $L=\Gamma(\pi)$, we have $A_LL=b,\,L\geq 0$ by the definition of $\Gamma(\cdot)$ and the construction of $A_L$ and $b$. Hence  $V(L^\star)\geq V(L)=V(\Gamma(\pi))\geq V(\Gamma(\pi^\star))-\epsilon$. Meanwhile,  for any $\pi_{L^\star}\in\Pi(L^\star)$, we have $\pi_{L^\star}\in\mathcal{M}$ and $L^\star=\Gamma(\pi_{L^\star})$ by the definitions of $\Pi(\cdot)$ and $\Gamma(\cdot)$, and hence $V(L^\star)=V(\Gamma(\pi_{L^\star}))\leq V(\Gamma(\pi^\star))$. Putting these inequalities together, we see that $V(L^\star) -V(L)\leq V(\Gamma(\pi^\star))-(V(\Gamma(\pi^\star))-\epsilon)=\epsilon$, and hence $L$ is $\epsilon$-suboptimal for maximizing $V(L)$ subject to $A_LL=b,\,L\geq0$. 

Similarly, for the second statement, note that $A_LL=b$ and $L\geq 0$, implying $L\in(\Delta(\mathcal{S}\times\mathcal{A}))^T$ and hence $\Pi(L)\subseteq\mathcal{M}$. Now for any $\pi\in\Pi(L)$, by the definition of $\Pi(L)$ and the fact that $A_LL=b$, we have $\pi\in\mathcal{M}$ and $L=\Gamma(\pi)$. Let $L^{\pi^\star}=\Gamma(\pi^\star)$. Then $A_{L^{\pi^\star}}L^{\pi^\star}=b,\, L^{\pi^\star}\geq 0$, therefore $V(\Gamma(\pi^\star))=V(L^{\pi^\star})\leq V(L)\leq V(L^\star)$ and  $V(\Gamma(\pi^\star))\geq V(\Gamma(\pi))=V(L)\geq V(L^\star)-\epsilon$, which imply that $V(\Gamma(\pi^\star))-V(\Gamma(\pi))\leq V(L^\star) - (V(L^\star)-\epsilon)=\epsilon$. Hence $\pi$ is $\epsilon$-suboptimal for maximizing $V(\Gamma(\pi))$ over $\mathcal{M}$.  
\end{proof}

\subsection{Proof of Theorem \ref{mesobomo-vs-pareto}}
\label{appendix:theorem4}

\begin{proof}
To better illustrate the main idea of the proof, we prove the claims for the limiting case when $\rho$ goes to infinity, in which case we would have $||A_LL-b||=0$ and $||A_L^\top y+z-c_L||=0$ as indicated by Lemma \ref{mesob-omo-penalty-appendix}. More precisely, we first consider the limiting case when we solve the ``intermediate'' MESOB-OMO \eqref{mesobomo-constr} to $\epsilon$-suboptimality. Accordingly, we denote $f^{\text{MESOB-OMO-constr}}(y,z,L)$ and $f^\star(\lambda_1,\lambda_2)$ as the objective and the optimal objective value of \eqref{mesobomo-constr}, respectively. The case when $\rho$ is finite but satisfies the lower bound specified in the theorem can be similarly derived by combining Lemma \ref{zL_vs_expl} and Lemma \ref{mesob-omo-penalty-appendix}. 

Suppose on the contrary that there exists a policy $\pi'$ such that 
\begin{equation}\label{contradict-assump}
V(\Gamma(\pi')) \geq V(\Gamma(\pi)) + \epsilon/(2\lambda_1) \quad \text{and} \quad \text{Expl}(\pi') \leq \text{Expl}(\pi) - \epsilon/(2\lambda_2),
\end{equation}
where one of the inequalities is strict. Let $L'=\Gamma(\pi')$ and define $y'$ and $z'$ as in \citet[Theorem 9]{guo2022mf}. Then $y', z', L'$ is a feasible solution of the ``intermediate'' MESOB-OMO given in (\ref{mesobomo-constr}) and $L'^\top z'=\text{Expl}(\pi')$ by \citet[Theorem 9]{guo2022mf}.

Note that since we have $||A_LL-b||=0$ and $||A_L^\top y+z-c_L||=0$, by applying Lemma \ref{zL_vs_expl}, we see that $L^\top z\geq \text{Expl}(\pi)$. Hence we have that 
\[
\begin{split}
f^\texttt{MESOB-OMO-constr}(y,z,L) - f^\star(\lambda_1,\lambda_2)&\geq f^\texttt{MESOB-OMO-constr}(y,z,L) - f^\texttt{MESOB-OMO-constr}(y',z',L') \\
&= -\lambda_1 V(L)+\lambda_2\ L^\top z  + \lambda_1 V(L')+\lambda_2\ L'^\top z' \\
&= \lambda_1\ [V(L')-V(L)]+\lambda_2\  [L^\top z-L'^\top z'] \\
&\geq \lambda_1\ [V(L')-V(L)]+\lambda_2\  [\text{Expl}(\pi)-\text{Expl}(\pi')]\\
&> \lambda_1\ (\frac{\epsilon}{2\lambda_1}) + \lambda_2\ (\frac{\epsilon}{2\lambda_2}) = \epsilon,
\end{split}
\]
where the strict inequality follows from our assumptions in \eqref{contradict-assump}. But this obviously contradicts with the assumption that $(y,z,L)$ solves \eqref{mesobomo-constr} to $\epsilon$-suboptimality, thus proving our claim. 
\end{proof}

\subsection{Proof of Theorem \ref{mesob-omo-asymptotic}}
We need the following lemma, which bounds $V(d)$ when $d$ is approximately consistent. 
\begin{lemma}\label{mfc_approx_consistent_ub}
If $\|A_dd-b\|_2\leq \epsilon$ and $d\geq 0$, then $V(d)\leq \max_{\pi\in\mathcal{M}}V(\Gamma(\pi))+O(\epsilon)$.
\end{lemma}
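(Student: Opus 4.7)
The plan is to reduce the claim to exact consistency by showing that any approximately-consistent $d$ is close (in $\ell_1$) to the flow $\Gamma(\pi)$ induced by the policy $\pi \in \Pi(d)$ retrieved from $d$ via normalization, and then to transfer the $V$-value across this small perturbation using Lipschitz continuity.

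First I would retrieve $\pi \in \Pi(d)$ from $d$ by the usual normalization $\pi_t(a|s) = d_{t,s,a}/\sum_{a'} d_{t,s,a'}$ when the marginal is positive and an arbitrary probability vector otherwise. Writing $L^\pi := \Gamma(\pi)$, the key step is to invoke a propagation-of-error bound of the form
\[
\sum_{s,a}\bigl|d_{t,s,a} - L^\pi_t(s,a)\bigr| \le \frac{(C_P+1)^{t+1}-1}{C_P}\sqrt{S}\,\epsilon,
\]
which is exactly the inequality \eqref{L_pi_minus_L} already used in the proof of Lemma \ref{zL_vs_expl}. This is valid because the equality $A_d d = b$ encodes precisely the mean-field recursion for the flow induced by the normalized policy $\pi$; a residual of $\ell_2$-norm at most $\epsilon$ therefore propagates into an $O(\epsilon)$ $\ell_1$ discrepancy between $d_t$ and $L^\pi_t$ via a discrete Gronwall-type iteration with constant determined by the Lipschitz modulus $C_P$ of the transitions. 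Rather than redo this induction, I would quote the bound directly.

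Next I would transfer this closeness into a bound on $V$. Under the standing Lipschitz hypotheses carried over from Theorem \ref{mesobomo-vs-pareto}, each $r_t^{(k)}(\cdot,\cdot,L_t)$ is Lipschitz in $L_t$, $d$ is bounded (since $\|A_d d-b\|_2\le\epsilon$ and $d\ge 0$ force each $\vec{d}_t$ to lie within $O(\epsilon)$ of $\Delta(\mathcal{S}\times\mathcal{A})$ by summing the consistency equations), and so $V^{(k)}(d) = \sum_{t,s,a} d_{t,s,a}\, r_t^{(k)}(s,a,\vec{d}_t)$ is Lipschitz in $d$ on this set. Composing with the Lipschitz $F$ yields $|V(d) - V(L^\pi)| = O(\epsilon)$. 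Since $L^\pi = \Gamma(\pi)$ with $\pi\in\mathcal{M}$, we have $V(L^\pi) \le \max_{\pi'\in\mathcal{M}} V(\Gamma(\pi'))$, and the desired conclusion follows.

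The main obstacle is the propagation step: quantifying how an $\ell_2$ residual of size $\epsilon$ in $A_d d = b$ becomes a uniform-in-$t$ $\ell_1$ bound on $d_t - L^\pi_t$, and in particular handling states where $\sum_a d_{t,s,a}$ is very small (so the retrieval in $\Pi(d)$ is ambiguous). Because $\mu_0>0$ componentwise and the identical argument has already been carried out in the proof of Lemma \ref{zL_vs_expl}, with constants depending only on $S$, $T$, and $C_P$, I can cite that bound rather than re-derive it, reducing this potential difficulty to a reference.
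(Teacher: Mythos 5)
Your proposal matches the paper's own argument: the paper likewise defines $d^\pi:=\Gamma(\pi)$ for an arbitrary $\pi\in\Pi(d)$, invokes the same propagation bound from \citet[Proof of Theorem 8, Step 3]{guo2022mf} used in the proof of Lemma \ref{zL_vs_expl} (i.e., \eqref{L_pi_minus_L}), and then concludes via the Lipschitz continuity of the welfare in the occupation measure. Your write-up simply fills in the Lipschitz-transfer details that the paper omits for brevity, so it is correct and essentially the same proof.
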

\begin{proof}
This can be proved by defining $d^\pi:=\Gamma(\pi)$ for some arbitrary $\pi\in\Pi(d)$, and invoking \cite[Proof of Theorem 8, Step 3]{guo2022mf} as in the proof of Lemma \ref{zL_vs_expl}. The details are hence omitted for brevity. 
\end{proof}
We are now ready to prove Theorem \ref{mesob-omo-asymptotic}.
\begin{proof}
Without loss of generality, let's assume that 
$\epsilon^l\leq 1$   
for all $l\geq 0$. Then 
by Lemma \ref{mesob-omo-penalty-appendix}, we have 
\[
\|A_{d^l}d^l-b\|_2^2\leq \dfrac{(\lambda_1^l C_1+\lambda_2^lC_2+\epsilon^l)\epsilon^l}{2\max\{\lambda_1^l,\lambda_2^l,1\}}\leq \dfrac{(C_1+C_2+1)\epsilon^l}{2},
\]
\begin{equation}\label{consistency_lambda1_bound}
\|A_{d^l}d^l-b\|_2^2\leq \dfrac{(\lambda_1^l C_1+\lambda_2^lC_2+\epsilon^l)\epsilon^l}{2\max\{(\lambda_1^l)^2,\lambda_1^l\lambda_2^l,\lambda_1^l\}}\leq \dfrac{(C_1+C_2+1)\epsilon^l}{2\lambda_1^l},
\end{equation}
and similarly 
\[
\|A_{d^l}^\top y^l+z^l-c_{d^l}\|_2^2\leq \dfrac{(C_1+C_2+1)\epsilon^l}{2}.
\]
In addition, we have $-\lambda_1^l V_1(d^l)+\lambda_2^l (z^l)^\top d^l\leq f^\star(\lambda_1^l,\lambda_2^l)+\epsilon^l$. 

\paragraph{Proof of the first claim.}
Now by Lemma \ref{MFCinL} and \eqref{mfvmo-full}, we have that $\pi\in\mathcal{M}$ is a solution to the equilibrium selection problem that maximizes $V(\Gamma(\pi))$ over all NE policies $\pi$ with $\text{Expl}(\pi)=0$ if and only if $\exists\,y,z,d$, such that $\pi\in\Pi(d)$ and $(y,z,d)$ solves the following optimization problem:
\begin{equation}
\label{mesobomo-equilibrium-selection}
    \begin{array}{ll}
        \text{minimize}_{y,z,d} &
         -V(d),\\
        \text{subject to} & z^\top d=0,\, 
        A_dd=b, \ A_d^\top y+z=c_d,\ d\geq 0,\ z\geq 0, \tag{equilibrium selection}\\
        & {\bf 1}^\top z\leq SA(T^2+T+2)r_{\max}, \\  & \|y\|_2\leq S(T+1)(T+2)r_{\max}/2.
    \end{array}
\end{equation}
Let $(\tilde{y},\tilde{z},\tilde{d})$ be an optimal solution of \eqref{mesobomo-equilibrium-selection}. 
 Hence by the fact that $\inf_{l\geq 0}\lambda_1^l>0$, we have 
\[
-V(d^l)+\frac{\lambda_2^l}{\lambda_1^l}(z^l)^\top d^l\leq -V(d^\star)+\frac{\lambda_2^l}{\lambda_1^l}(z^\star)^\top d^\star+O(\epsilon^l)\leq -V(\tilde{d})+O(\epsilon^l),
\]
where $(y^\star,z^\star,d^\star)$ solves \eqref{mesobomo-constr} for some $y^\star$, namely $-\lambda_1^l V(d^\star)+\lambda_2^l (z^\star)^\top d^\star=f^\star(\lambda_1^l,\lambda_2^l)$. Here the second inequality comes from the fact that $(\tilde{y},\tilde{z},\tilde{d})$ is also feasible for \eqref{mesobomo-constr}. Hence by the non-negativity of $z^l$ and $d^l$, we have 
\[
-V(d^l)\leq -V(\tilde{d})+O(\epsilon^l) \quad \text{and} \quad (z^l)^\top d^l\leq \frac{\lambda_1^l}{\lambda_2^l}O(1).
\]
Hence when $\lim\limits_{l\rightarrow\infty}\frac{\lambda_1^l}{\lambda_2^l}=0$ and $\lim\limits_{l\rightarrow\infty}\epsilon^l=0$, we have for any limit point $(\bar{y},\bar{z},\bar{d})$ of $(y^l,z^l,d^l)$, due to the continuity assumptions on $F,\,P_t,\,r_t,\,r_t^{(k)}$, that $A_{\bar{d}}\bar{d}=b$, $A_{\bar{d}}^\top \bar{y}+\bar{z}=c_{\bar{d}}$, $\bar{z}^\top \bar{d}=0$, $V(\bar{d})\geq V(\tilde{d})$, and also $\bar{d}\geq 0,\,\bar{z}\geq 0$, ${\bf 1}^\top \bar{z}\leq SA(T^2+T+2)r_{\max}$ and $\|\bar{y}\|_2\leq S(T+1)(T+2)r_{\max}/2$. Together with the optimality of $\tilde{d}$ for \eqref{mesobomo-equilibrium-selection}, we see that $\bar{d}$ also solves \eqref{mesobomo-equilibrium-selection}, from which we conclude that for any $\bar{\pi}\in\Pi(\bar{d})$, it solves the equilibrium selection problem. 

\paragraph{Proof of the second claim.}
Similarly, define $V^{\max}:=\max_{\pi\in\mathcal{M}}V(\Gamma(\pi))$. Then  by Lemma \ref{zL_vs_expl}, Lemma \ref{MFCinL} and \eqref{mfvmo-full}, we have that $\pi$ solves  the  social equalizing problem that minimizes $\text{Expl}(\pi)$ over all socially optimal policies that maximizes $V(\Gamma(\pi))$ if and only if $\exists\,y,z,d$, such that $\pi\in\Pi(d)$ and $(y,z,d)$ solves the following optimization problem:
\begin{equation}
\label{mesobomo-social-equalizing}
    \begin{array}{ll}
        \text{minimize}_{y,z,d} &
         z^\top d,\\
        \text{subject to} & V(d)=V^{\max},\,A_dd=b, \ A_d^\top y+z=c_d,\ d\geq 0,\ z\geq 0, \tag{social equalizing}\\
        & {\bf 1}^\top z\leq SA(T^2+T+2)r_{\max}, \\  & \|y\|_2\leq S(T+1)(T+2)r_{\max}/2.
    \end{array}
\end{equation}
With some slight abuse of notation, we again denote 
$(\tilde{y},\tilde{z},\tilde{d})$ as the optimal solution to \eqref{mesobomo-social-equalizing}. Then by the fact that $\inf_{l\geq 0}\lambda_2^l>0$, we have
\begin{equation}\label{social_equalizing_penalty_ineqs}
(z^l)^\top d^l-\frac{\lambda_1^l}{\lambda_2^l}V(d^l)\leq (z^\star)^\top d^\star-\frac{\lambda_1^l}{\lambda_2^l}V(d^\star)+O(\epsilon^l)\leq \tilde{z}^\top \tilde{d}-\frac{\lambda_1^l}{\lambda_2^l}V^{\max}+O(\epsilon^l),
\end{equation}
where $(y^\star,z^\star,d^\star)$ solves \eqref{mesobomo-constr} for some $y^\star$ as in the proof of the first claim above. Here the second inequality comes from the fact that $V(\tilde{d})=V^{\max}$ and that $(\tilde{y},\tilde{z},\tilde{d})$ is also feasible for \eqref{mesobomo-constr}. By the bounds on $z^l,\,d^l,\,\tilde{z},\,\tilde{d}$, we immediately see that 
\begin{equation}\label{vdiff_ineq}
V^{\max}-V(d^l)\leq \frac{\lambda_2^l}{\lambda_1^l}\ O(1).
\end{equation}
Hence when $\lim\limits_{l\rightarrow\infty}\frac{\lambda_1^l}{\lambda_2^l}=\infty$ and $\lim\limits_{l\rightarrow\infty}\epsilon^l=0$,  for any limit point $(\bar{y},\bar{z},\bar{d})$ of $(y^l,z^l,d^l)$, again due to the continuity assumptions of $F,\,P_t,\,r_t,\,r_t^{(k)}$, we have $A_{\bar{d}}\bar{d}=b$, $A_{\bar{d}}^\top \bar{y}+\bar{z}=c_{\bar{d}}$, and also $\bar{d}\geq 0,\,\bar{z}\geq 0$, ${\bf 1}^\top \bar{z}\leq SA(T^2+T+2)r_{\max}$ and $\|\bar{y}\|_2\leq S(T+1)(T+2)r_{\max}/2$. Particularly, by Lemma \ref{MFCinL}, $A_{\bar{d}}\bar{d}=b,\,\bar{d}\geq 0$ implies that $V(\bar{d})\leq V^{\max}$. Since we also have $V^{\max}-V(\bar{d})\leq 0$ by taking the limit of \eqref{vdiff_ineq}, we conclude that $V(\bar{d})=V^{\max}$. 
 
Finally, by Lemma \ref{mfc_approx_consistent_ub} and \eqref{consistency_lambda1_bound}, we have 
\[
V(d^l)\leq V^{\max}+O(\epsilon^l/\lambda_1^l), 
\]
and hence we have by \eqref{social_equalizing_penalty_ineqs} that
\[
(z^l)^\top d^l\leq \tilde{z}^\top \tilde{d}+O(\epsilon^l)+\dfrac{\lambda_1^l}{\lambda_2^l}(V(d^l)-V^{\max})\leq \tilde{z}^\top \tilde{d}+O(\epsilon^l)+\dfrac{1}{\inf_{l\geq 0}\lambda_2^l}O(\epsilon^l)=\tilde{z}^\top\tilde{d}+O(\epsilon^l).
\]
Now taking the subsequence limit on both sides of the above inequality, we have $\bar{z}^\top \bar{d}\leq \tilde{z}^\top\tilde{d}$. Together with the optimality of $\tilde{d}$ and $\tilde{z}$ for \eqref{mesobomo-social-equalizing}, we see that $(\bar{y},\bar{z},\bar{d})$ also solves \eqref{mesobomo-social-equalizing}, from which we conclude that for any $\bar{\pi}\in\Pi(\bar{d})$, it solves the social equalizing problem.
\end{proof}

\end{document}